\documentclass[twocolumn,vr_35]{ieeetran}
\usepackage{etoolbox}
\makeatletter
\patchcmd{\@makecaption}
  {\scshape}
  {}
  {}
  {}
\makeatother
 \usepackage{amsmath,amssymb}
 \usepackage{subfigure}
 \usepackage{graphicx,graphics,color,psfrag}
 \usepackage{cite,balance}
 \usepackage{caption}
 \captionsetup{font={small}}
 \allowdisplaybreaks
 \usepackage{algorithm}
 \usepackage{accents}
 \usepackage{amsthm}
 \usepackage{bm}
 \usepackage{algorithmic}
 \usepackage[english]{babel}
 \usepackage{multirow}
 \usepackage{enumerate}
 \usepackage{cases}
 \usepackage{stfloats}
 \usepackage{dsfont}
 \usepackage{color,soul}
 \usepackage{amsfonts}
 \usepackage{cite,graphicx,amsmath,amssymb}
 \usepackage{subfigure}
 \usepackage{fancyhdr}
 \usepackage{hhline}
 \usepackage{graphicx,graphics}
 \usepackage{array,color}
 \usepackage{amsmath}
\usepackage{float}
\usepackage{amssymb}
\usepackage{amsmath}
\usepackage{amsthm}
\usepackage{amsfonts}
\usepackage{graphicx}
\usepackage{algorithm}
\usepackage{algorithmic}
\usepackage{epstopdf}
\usepackage{cite}
\usepackage{amsmath,bm}
\usepackage{subfigure}
\usepackage{graphicx}
\usepackage{color}
\usepackage{graphicx}
\usepackage{calc}
\usepackage{caption}
\usepackage{makecell}

\newtheorem{theorem}{\textbf{Theorem}}

\newtheorem{lemma}{\textbf{Lemma}}
\newtheorem{remark}{\textbf{Remark}}
\newtheorem{Prob}{\textbf{Problem}}

\newtheorem{property}{Property}
\begin{document}
\title{Communications, Caching and Computing for Mobile Virtual Reality: Modeling and Tradeoff}
\author{Yaping Sun, Zhiyong Chen, Meixia Tao and Hui Liu\thanks{The paper was presented in part at IEEE ICC 2018 \cite{Sunvr}.

Y. Sun and M. Tao are with the Department of Electronic Engineering, Shanghai Jiao Tong University, Shanghai 200240, China (e-mail: yapingsun@sjtu.edu.cn; mxtao@sjtu.edu.cn).

Z. Chen and H. Liu are with the Cooperative Medianent Innovation Center, Shanghai Jiao Tong University, Shanghai 200240, China, and also with the Shanghai Key Laboratory of Digital Media Processing and Transmissions, Shanghai 200240, China (e-mail: zhiyongchen@sjtu.edu.cn; huiliu@sjtu.edu.cn).}}

\maketitle
\begin{abstract}
Virtual reality (VR) over wireless is emerging as an important use case of 5G networks. Immersive VR experience requires the delivery of huge data at ultra-low latency, thus demanding ultra-high transmission rate. This challenge can be largely addressed by the recent network
architecture known as mobile edge computing (MEC), which enables caching and computing
capabilities at the edge of wireless networks. This paper presents a novel MEC-based mobile
VR delivery framework that is able to cache parts of the field of views (FOVs) in advance and run certain post-processing procedures at the mobile VR device. To optimize resource allocation at the mobile VR device, we formulate a joint caching and computing decision problem to minimize the average required transmission rate while meeting a given latency constraint. When FOVs are homogeneous, we obtain a closed-form expression for the optimal joint policy which reveals interesting communications-caching-computing tradeoffs. When FOVs are heterogeneous, we obtain a local optima of the problem by transforming it into a linearly constrained indefinite quadratic problem then applying concave convex procedure. Numerical results demonstrate great promises of the proposed mobile VR delivery framework in saving communication bandwidth while meeting low latency requirement.
\end{abstract}
\begin{IEEEkeywords}
Virtual reality,  mobile edge computing, wireless caching, low latency, transmission rate.
\end{IEEEkeywords}

\section{Introduction}\label{I}
\subsection{Motivation}
Virtual reality (VR) over wireless, namely mobile VR delivery, is emerging as an important use case of 5G and beyond networks, due to its ability to generate an immersive experience at the full fidelity of human perception \cite{Sunvr,burst,whitepaper}. A recent market report forecasts that the data consumption from mobile VR devices (smartphone-based or standalone) will grow by over $650\%$ in the next 4 years (2017-2021)\cite{VRmarket}. Immersive VR experience requires the delivery of massive amount of data (on the order of \textit{Gigabyte}) at ultra-low latency (less than \emph{$20$ ms}), thus demanding
ultra-high transmission rate and leading to the wireless bandwidth bottleneck problem \cite{whitepaper}.  

In order to tackle the challenge, the recent network architecture concept known as mobile edge computing (MEC), which enables caching and computing capabilities at the edge of wireless networks, is envisioned as one of the key enablers for mobile VR delivery \cite{e}. The main idea of the MEC network is to  cache strategic contents in advance and compute certain post-processing procedures on demand at the mobile edge network, thereby reducing traffic load as well as response time. Thus, in this paper, we aim to investigate the mobile VR delivery using MEC network architecture and find out how to make the best use of the caching and computing capabilities of the MEC network to minimize the bandwidth requirement for mobile VR delivery while satisfying the stringent latency constraint.


%
\subsection{Our Contributions}
\begin{figure}[t]
\begin{center}
 \includegraphics[width=9cm]{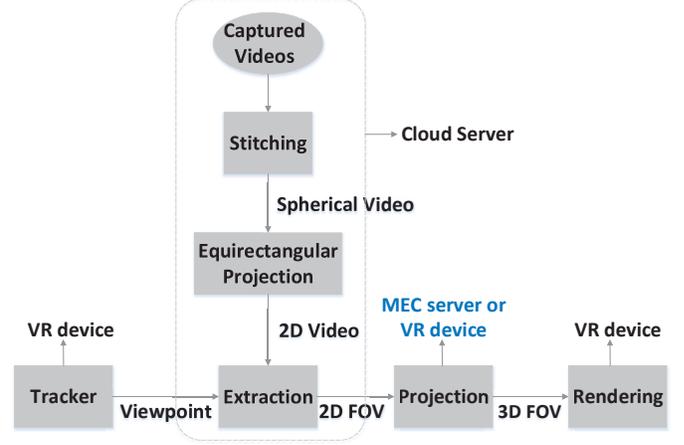}
\end{center}
 \caption{\small{A typical framework of $360^\circ$ VR video producing \cite{Simone}.}}
\label{Architecture}
\end{figure}
To illustrate the problem at hand, we first analyze a typical $360^\circ$ VR video producing framework \cite{Simone}, as shown in Fig. \ref{Architecture}: 
i) \emph{Stitching}, which obtains a spherical video by stitching the videos captured by a multi-camera array; ii) \emph{Equirectangular projection}, which obtains 2-dimensional (2D) video by unfolding the obtained spherical video; iii) \emph{Extraction}, which extracts the 2D video to obtain 2D field of view (FOV) of the viewpoint captured by the \emph{tracker} at the mobile VR device;
iv) \emph{Projection}, which projects 2D FOV into 3D FOV; 
v) \emph{Rendering}, which renders the obtained 3D FOV onto the display of the mobile VR device.

 We propose the following method to realize the above mentioned framework within the MEC network illustrated in Fig.~\ref{model}, which consists of one MEC server (e.g., base station) and one mobile VR device, both with certain caching and computing capabilities. First, without doubt, the tracker and rendering components must be computed at the mobile VR device. Secondly, we assume that the first three pre-processing procedures including stitching, equirectangular projection and extraction components are computed offline at the cloud server. Considering that such three components require the entire $360^\circ$ video as inputs, realizing them at the cloud server can release both MEC server and mobile VR device from heavy computation process as well as alleviate the traffic burden within the wireless network. Then, 2D FOVs of all the viewpoints extracted at the cloud server can be cached at the MEC server in advance, thereby reducing the traffic burdern on the backhaul link and also the response latency. Moreover, the projection component can be computed offline at the MEC server, and thus 3D FOVs of all the viewpoints can also be proactively cached at the MEC server.



A key observation is that the projection component can be offloaded from the MEC server to the mobile VR device due to its low computational complexity \cite{3D} and the increasing computing capability of the mobile VR device \cite{e}. Specifically, compared with downloading the requested 3D FOV from the MEC server, named as MEC computing, downloading 2D FOV from the MEC server and then computing the projection at the mobile VR device can reduce at least half of the traffic load on the wireless link. This is due to the fact that in order to create a stereoscopic vision, the projection component has to be computed twice (one for each eye) to obtain two slightly differing images \cite{3D}, and hence the data size of  3D FOV is at least twice larger than that of the 2D FOV. 
However, computing at the mobile VR device incurs additional computation latency. Thus, \textit{the computing policy}, i.e., whether to compute the projection at the mobile VR device or not, requires careful design. In addition, caching capability of the mobile VR device can be utilized to store 2D FOVs or 3D FOVs of some viewpoints for future requests. Specifically,  compared with caching a 2D FOV,
caching 3D FOV can help reduce both latency and energy consumption, since the 3D FOV request can be directly served  and without the need of transmission and computing. However, 3D FOV caching consumes at least twice larger caching resource at the mobile VR device than 2D FOV caching. Thus, \textit{the caching policy}, i.e., caching  2D FOVs or 3D FOVs at the mobile VR device, also requires careful design.

Main contributions of this paper are summarized as follows.
\begin{itemize}
\item \emph{A novel MEC-based framework for mobile VR delivery:} We propose a realization method for mobile VR delivery, as mentioned above. This method allows the pre-processing procedures computed at the cloud server and post-processing procedure, i.e., projection component, computed at the MEC server or the mobile VR device, thereby significantly reducing the transmission data within the wireless network as well as required latency.
\item \emph{Optimal joint caching and computing policy:} Based on the proposed realization method, when FOVs are homogeneous, we formulate joint caching and computing decision problem to minimize the average transmission rate, under the latency, local cache size and average energy consumption constraints. By analyzing the optimal properties and solving several linear programming problems, a closed-form expression for the optimal joint policy is obtained and provides useful guidelines for network designers on how to make the best use of caching and computing capabilities of the mobile VR device.
\item \emph{Communications-caching-computing tradeoff:}
Based on the optimal joint policy, we derive the minimum required transmission rate and theoretically illustrate the communications-caching-computing (3C) tradeoff. Analytical results show that compared with MEC computing, the transmission rate gain under the optimal joint policy comes from the following three aspects: local 3D caching, local computing with local 2D caching and local computing without local caching. We theoretically reveal that such three gains can be exploited opportunistically in different caching and computing capability regimes. For example, when the computation frequency at the mobile VR device is relatively small, there is no local computing gain without local caching, and transmission rate gain comes from local 3D caching and local computing with local 2D caching. In addition, caching resource at the mobile VR device is exploited more efficiently joint with computing resource, and vice versa; when the computation frequency is large enough, the gain comes from local 3D caching and local computing with/without caching coherently. The power efficiency of the mobile VR device is also shown to play an important role in the transmission rate via determining the local computing gain directly. More details can be seen in Section IV.



\item  \emph{Heterogeneous scenario optimization:} We extend the joint caching and computing optimization problem to the scenario where FOVs are heterogeneous. In particular, we first show the NP-hardness of the joint policy optimization problem, and then obtain a local optima of the problem via transforming it into an equivalent linearly constrained indefinite quadratic problem (IQP) and using concave convex procedure (CCCP) \cite{localoptima}.  Numerical results demonstrate great promises of the proposed mobile VR delivery framework in saving communication bandwidth while meeting low latency requirement.
\end{itemize}

\subsection{Related Works}
Researchers in both academia and industry have made great efforts in order to achieve mobile VR delivery. First of all, at any given time, since each user only watches a portion of the $360^{\circ}$ VR video, the requested FOV is chosen to be transmitted instead of the entire panoramic video, thereby saving bandwidth significantly. Then, by knowing each user's FOV, multi-view and tile-based  video streaming have been investigated in \cite{tile1,tile2}. To further improve the quality of experience, motion-prediction-based transmission is also being studied based on dataset collected from real users \cite{qian,fixation,chen1,chen2}. However, \cite{3D,tile1,tile2,qian,fixation,chen1,chen2} mainly focus on the VR video-level design, and have not investigated the opportunities for mobile VR delivery potentially obtained via efficiently using the MEC network architecture. 

The opportunities for mobile VR delivery that can be potentially obtained via efficiently utilizing resources at MEC network, i.e., 3C, have been studied in \cite{e,erol,mohammed1,Simone,J,colla,bigdata,game,mohammed2}. Specifically, \cite{e,erol,mohammed1} envision joint computing and caching as the key enablers for mobile VR delivery and illustrate the potential gain  via simulation results. \cite{Simone} provides an explicit VR framework, based on which the insights on how to deliver $360^\circ$ video in mobile edge network are illustrated. However, \cite{e,erol,mohammed1,Simone} do not establish explicit theoretical formulation or propose any efficient algorithms. On the other hand, \cite{colla} proposes a collaborative cache allocation and computation offloading policy, where the MEC servers collaborate for executing computation tasks and data caching. \cite{bigdata} extends the results in \cite{colla} to a big data MEC network. \cite{game} proposes hybrid control algorithms in smart base stations  along with devised communication, caching, and computing techniques based on game theory. However, the joint caching and computing designs developed in \cite{colla,bigdata,game} do not exploit specific nature of VR delivery and look deeper into the VR delivery framework, and thus the performances  are limited. \cite{J} formulates an optimization framework for VR video delivery in a cache-enabled cooperative multi-cell network and explores the fundamental tradeoffs between caching, computing and communication for VR/AR applications. \cite{mohammed2} proposes joint  policy based on millimeter wave communication for interactive VR game applications.

\begin{figure}[t]
\begin{center}
 \includegraphics[width=9cm]{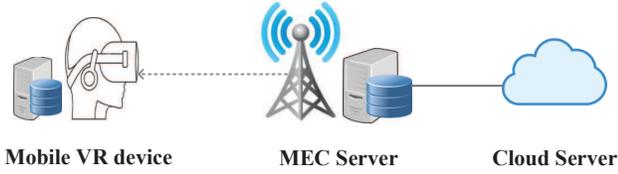}
\end{center}
 \caption{\small{MEC-based VR delivery model.}
 }
\label{model}
\end{figure}
It is worthy to note that all the works in \cite{colla,bigdata,game,J,mohammed2}  try to utilize the caching and computing resources at the MEC servers to alleviate the computation burdens at the mobile devices. However, as mentioned above, for the mobile VR delivery, computing at the MEC server may incur more transmission data since the computation results are generally larger than the inputs. Thus, in this paper, we focus on utilizing the caching and computing capabilities at the mobile VR device  to alleviate the communication burden on the wireless link and tackle the wireless bandwidth bottleneck problem. \cite{xiaoyang} exploits the caching and computing capabilities at the mobile VR device to minimize the traffic load over wireless link. However, \cite{xiaoyang} does not capture the specific nature of VR delivery and designs the optimal computation offloading policy based on the most popular caching policy, and thus the performance is limited. Therefore, the fundamental impacts of caching and computing resources at the mobile devices on the bandwidth requirement still have not been fully unleashed.


\subsection{Outline}
An outline of the remainder of the paper is as follows. Section II describes the system model for the MEC-based mobile VR delivery system under consideration. Section III formulates the joint policy optimization problem for the homogeneous scenario. Section IV obtains the optimal policy and 3C tradeoffs. Section V formulates the optimization problem  for the heterogeneous scenario and obtains the local optima via CCCP. Section VI concludes the paper.

\begin{table}[t]
\caption{Key Notations}
\begin{center}
\vspace{-2mm}
\newcommand{\tabincell}[2]{\begin{tabular}{@{}#1@{}}#2\end{tabular}}
\newcommand{\tl}[1]{\multicolumn{1}{l}{#1}} 
\renewcommand\arraystretch{1}
\setlength{\tabcolsep}{2mm}{
\begin{tabular}{|c!{\vrule width 0.9pt}l|}
\hline
Notation&Meaning\rule{0pt}{2mm}\\
\hhline{|=|=|}
\hline
$\mathcal{N},N$, $i$&  set of viewpoints,  number of viewpoints,  viewpoint index \\
\hline
$D^I,w,D^O,\tau$& \tabincell{l}{data size of 2D FOV, computation load, \\data size of 3D FOV, maximum tolerable service latency}\\
\hline
$C$, $\bar{E}$, $f_V$ &  \tabincell{l}{ cache size, average available energy, computation \\frequency at the mobile VR device}  \\
\hline
$f_S$ & computation frequency at the MEC server\\

\hline
$R_S$&  \tabincell{l}{the least required transmission rate  when the projection\\ is computed at the MEC server} \\
\hline
$R_V$ & \tabincell{l}{the least required transmission rate when the projection\\ is computed at the mobile device without  caching}\\
\hline
$c_i^I\in \{0,1\}$ & \tabincell{l}{$c_i^I = 1$ means that the 2D FOV of viewpoint $i$ is stored \\at the mobile VR device and not otherwise.} \\
\hline
$c_i^O\in \{0,1\}$ & \tabincell{l}{$c_i^O = 1$ means that the 3D FOV of viewpoint $i$ is stored \\at the mobile VR device and not otherwise.} \\
\hline
$d_i\in \{0,1\}$ & \tabincell{l}{$d_i=1$ means that projection is computed \\at the mobile VR device and not otherwise.}\\


\hline

\end{tabular}}
\vspace{-2mm}
\end{center}
\label{notation}
\end{table}

\section{System Model}

As illustrated in Fig.~\ref{model}, we consider a novel MEC-based mobile VR delivery system consisting of one MEC server and one mobile VR device, both with certain caching and computing capabilities. 
In this paper, we focus on the on-demand $360^\circ$ VR video streaming. As mentioned above, instead of transmitting the whole $360^{\circ}$ video, the MEC server only delivers the requested FOV  at each time. Key notations in this paper are summarized in Table~\ref{notation}.

\subsection{VR Task Model}
Denote with $\mathcal{N} \triangleq \{1,..., N\}$ the viewpoint space \cite{J}. 
The projection from 2D FOV to 3D FOV of each viewpoint $i\in \mathcal{N}$ is characterized
by a $3-$tuple $(D^I,D^O,w)$, where $D^I$ and $D^O$ are the data sizes (\emph{in bit}) of the 2D FOV and 3D FOV, respectively, and $w$ is the number of computation cycles required to process one bit input (\emph{in cycle/bit}). Denote with $\alpha \triangleq \frac{D^O}{D^I}$ the ratio of the size of 3D FOV to that of 2D FOV. Typically, $\alpha \geq 2$ in order to create a stereoscopic vision \cite{3D}. 
\subsection{Request Model}
The request stream at the mobile VR device conforms to the independent reference model (IRM) \cite{J} based on the following assumptions: i) the viewpoints that the mobile VR device requests are fixed to the set $\mathcal{N}$; ii) the probability of the  request for viewpoint $i\in \mathcal{N}$ at the mobile VR device \textcolor{black}{at each time}, denoted as $P_{i}$, is constant and independent of all the past requests, \textcolor{black}{satisfying} $\sum_{i=1}^N P_{i} = 1$. We consider uniform distribution, i.e., $P_i = \frac{1}{N}$ for each $i \in \mathcal{N}$.\footnote{The scenario with nonuniform data size and popularity distribution is considered in Section~\ref{heterogeneous}.}
 In addition,  in order to avoid dizziness and nausea, each request at the mobile VR device must be satisfied within the deadline of $\tau$ (\emph{in second}). 



\subsection{Caching and Computing Model}
First, consider the cache placement at the mobile VR device. We assume that the mobile VR device is equipped with a cache size $CD^I$ (\emph{in bit}), where $C$ is an integer, and is able to store both 2D and 3D FOVs of some viewpoints.
Denote with $c_i^I \in \{0,1\}$ the caching decision for 2D FOV of viewpoint $i$, where $c_i^I = 1$ means that the 2D FOV of viewpoint $i$ is cached at the mobile VR device and $c_i^I = 0$ otherwise. Denote with $c_i^O \in \{0,1\}$ the caching decision for 3D FOV of viewpoint $i$, where $c_i^O = 1$ means that the 3D FOV of viewpoint $i$ is cached at the mobile VR device and $c_i^O = 0$ otherwise. 
Under the cache size constraint of the mobile VR device, we have
\begin{equation}\label{CacheSize}
\sum_{i=1}^N  D^Ic_i^I + \alpha D^I c_i^O \leq CD^I.
\end{equation}
For the cache placement at the MEC server, we assume that both 2D and 3D FOVs of all the viewpoints 
are cached at the MEC server. This is reasonable due to the fact that the storage size at the MEC server is much larger than that of the mobile VR device.

Next,  consider the computing decision for the projection component at the mobile VR device. The mobile VR device is assumed to run at a \textcolor{black}{given} CPU-cycle frequency, denoted as $f_V$ (\emph{in cycle/s}), \textcolor{black}{and has an average energy constraint, denoted as $\bar{E}$ (\emph{in J})}. 
The energy consumed for computing one cycle with frequency $f_V$ at the mobile VR device is $k f_V^2$, where $k$ is a constant related to the hardware architecture and can indicate the power efficiency of CPU at the mobile VR device \cite{cycle}. 
Denote with $d_i \in \{0,1\}$ the computing decision for viewpoint $i$, where $d_i = 1$ indicates that the projection \textcolor{black}{from 2D FOV to 3D FOV is executed at the mobile VR device upon viewpoint request and} $d_i = 0$ otherwise.  
\textcolor{black}{Under the average energy consumption constraint of the mobile VR device,} we have
\begin{equation}\label{energy}
\frac{k f_V^2D^Iw}{N}\sum_{i=1}^N d_i \leq \bar{E}.
\end{equation}
 From (\ref{energy}), note that $ \frac{N\bar{E}}{k f_V^2D^Iw}$ corresponds to the maximum number of projections that can be computed at the mobile VR device, named as \textit{computing capability} of the mobile VR device, and is assumed to be an integer throughout this paper. 

Last, denote with $(\mathbf{c}^O, \mathbf{c}^I,\mathbf{d})$ the joint caching and computing decision at the mobile VR device, where $\mathbf{c}^O \triangleq (c_i^O)_{i \in \mathcal{N}}$ denotes the caching decision vector for 3D FOVs of all the viewpoints and $\mathbf{c}^I \triangleq (c_i^I)_{i\in \mathcal{N}}$ denotes the caching decision vector for 2D FOVs of all the viewpoints, satisfying the cache size constraint in (\ref{CacheSize}), and $\mathbf{d} \triangleq (d_i)_{i\in \mathcal{N}}$ denotes the computing decision vector, satisfying the local average energy consumption constraint in (\ref{energy}).

\subsection{Service Mechanism and Transmission Rate Requirement}

Based on the joint caching and computing decision, i.e., $(\textbf{c}^O,\textbf{c}^I,\textbf{d})$, we can see that request for viewpoint $i \in \mathcal{N}$ can be served via the following four routes, each of which yields a unique minimum transmission rate requirement, denoted as $R_i$ (\emph{in bit/s}).
\begin{itemize}
\item \textbf{Local 3D caching.} If $c_i^O=1$, 
the 3D FOV of viewpoint $i$ can be obtained from the local cache and without the need of the transmission and computing. In this way,  the required latency is negligible and the minimum required transmission rate is $R_i=0$.
\item \textbf{Local computing with local 2D caching.} If $c_i^O = 0$, $d_i = 1$ and $c_i^I=1$, the mobile VR device obtains the 2D FOV of viewpoint $i$ from the local cache and without the need of transmission, and then projects it to 3D FOV using its local CPU processor. Thus, the overall consumed latency is $\frac{D^Iw}{f_V}$ (\emph{in second}) and \textcolor{black}{the minimum required transmission rate is} $R_i = 0$. In this paper, we assume that $\frac{D^Iw}{f_V}\! <\! \tau$ for feasibility, i.e., computing the projection at the mobile VR device can be completed within the deadline.

\item \textbf{Local computing without local caching.} If $c_i^O = 0$, $d_i = 1$ and $c_i^I=0$, the mobile VR device downloads the 2D FOV of  viewpoint $i$ from the MEC server and then projects it to 3D FOV using its local CPU processor. Thus, the overall consumed latency is $\frac{D^I}{R_i} + \frac{D^Iw}{f_V}$ (\emph{in second}), where $\frac{D^I}{R_i}$ corresponds to the 2D FOV transmission latency over the wireless link and  $\frac{D^Iw}{f_V}$ corresponds to the computation latency at the mobile VR device. Under the latency constraint,  the minimum required transmission rate is $ R_i  = R_V \triangleq \frac{D^I}{\tau-\frac{D^Iw}{f_V}}.$

\item \textbf{MEC computing.} If $c_i^O = 0$ and $d_i=0$, the mobile VR device downloads the 3D FOV of viewpoint $i$ from the MEC server. Then, the overall consumed latency can be represented as $\frac{D^O}{R_i}$ (\emph{in second}). 
Under the latency constraint, the minimum required transmission rate is $ R_i = R_S \triangleq \frac{D^O}{\tau}.$

\end{itemize}

By combining all the above cases, for any given joint caching and computing decision $(\textbf{c}^O,\textbf{c}^I,\textbf{d})$,   the minimum average required transmission rate to deliver the requested 3D FOV under latency constraint, denoted as $\bar{R}$ (\emph{in bit/s}), is given by
\begin{equation}\label{averagerate111}
\bar{R} = \frac{1}{N} \sum_{i=1}^N \left( R_Vd_i(1-c_i^I) + R_S\left(1-d_i\right) \right)\left(1-c_i^O\right).
\end{equation}
Obviously, minimizing $\bar{R}$ is equivalent to minimizing the bandwidth for a given spectral efficiency.
\begin{table}[t]
\caption{Transmission Rates vs. Local Caching and Computing Costs}\label{tradeoff}
\newcommand{\tabincell}[2]{\begin{tabular}{@{}#1@{}}#2\end{tabular}}
\newcommand{\tl}[1]{\multicolumn{1}{l}{#1}} 
\renewcommand\arraystretch{1}
\begin{center}
\setlength{\tabcolsep}{1.5mm}{
\begin{tabular}{lccc}
\hline
Joint Decision & Rate  & Caching & Computing  \\
\hline
\tabincell{l}{Local 3D caching\\$c_i^O =1, c_i^I =0, d_i =0$}& $0$ & $\alpha D^I$ & $0$\\
\hline
\tabincell{l}{Local computing  with local 2D caching \\$c_i^O =0, c_i^I =1, d_i =1$}& $0$ & $D^I$ & $\frac{kD^Iwf_V^2}{N}$\\
\hline
\tabincell{l}{Local computing without local caching \\$c_i^O =0, c_i^I=0, d_i =1$}  & $\frac{R_V}{N}$ & $0$ & $\frac{kD^Iwf_V^2}{N}$ \\
\hline
\tabincell{l}{MEC computing \\ $c_i^O =0, c_i^I=0, d_i=0$} & $\frac{R_S}{N}$ & $0$ & $0$\\
\hline
\end{tabular}}
\end{center}
\label{a}
\end{table}%

\begin{remark}\label{gain}
As illustrated in Table~\ref{tradeoff}, for each viewpoint $i \in \mathcal{N}$, compared with local 3D caching, local computing with local 2D caching achieves the same rate gain and saves at least half of the cache size consumed by local 3D caching, but incurs additional energy consumption; compared with  local computing with local 2D caching, local computing without local caching saves cache cost, but incurs larger transmission rate requirement; compared with local computing without local caching, MEC computing saves local caching and computing cost, but relationship between its incurred transmission rate, i.e., $R_S$, and that incurred by local computing without local caching, i.e., $R_V$, depends on the local computing capability. Thus,  joint caching and computing design requires careful thinking.
\end{remark}
\section{Problem Formulation and Optimal Property Analysis}
In this section, we formulate the joint caching and computing optimization problem to minimize the average required transmission rate and analyze the optimal properties, based on which we obtain an equivalent problem.
\subsection{Problem Formulation}
\begin{Prob}[Joint Caching and Computing Optimization]\label{Prob1}
\begin{align}
& \min_{{\mathbf{c}^O,\mathbf{c}^I,\mathbf{d}}}\ \ \ \frac{1}{N} \sum_{i=1}^N \left( R_Vd_i(1-c_i^I) + R_S\left(1-d_i\right) \right)\left(1-c_i^O\right) \nonumber\\
&\ \ s.t. \ \ \ \ \ \ \ \ \ \ \ \ \ \ \ \ \ \ \ \ \ \ \  \sum_{i=1}^N  c_i^I + \alpha  c_i^O \leq C,\label{cc1} \\
&\ \ \ \ \ \ \ \ \ \ \ \ \ \ \ \ \ \ \ \ \ \ \ \ \ \ \ \ \  \sum_{i=1}^N d_i \leq  \frac{N\bar{E}}{k f_V^2D^Iw},\label{dd1}\\
&\ \ \ \ \ \ \ \ \ \ \ \ \ c_i^O\in \{0,1\},\ c_i^I \in \{0,1\},\ d_i \in \{0,1\},\ i \in \mathcal{N},\nonumber
\end{align}
\end{Prob}
\noindent
where (\ref{cc1}) and (\ref{dd1}) correspond to the cache size constraint in (\ref{CacheSize}) and average energy consumption constraint in (\ref{energy}), respectively. Denote with $R^*$ the optimal objective value of Problem \ref{Prob1} and $(\mathbf{c}^{O^*}, \mathbf{c}^{I^*}, \mathbf{d}^*$) the optimal joint caching and computing decision.



\subsection{Optimal Properties and Equivalent Formulation}
In this subsection, we analyze the optimal properties of the joint caching and computing policy, based on which we obtain an equivalent optimization. Denote with $c^O \triangleq \sum_{i=1}^N c_i^O$, $c^I \triangleq \sum_{i=1}^N c_i^I$ and $d \triangleq \sum_{i=1}^N d_i$ the number of locally cached 3D FOVs, that of locally cached 2D FOVs and that of locally computed projections, respectively. From (\ref{cc1}) and (\ref{dd1}), we have $c^O \in \left\{0,1,\cdots,\frac{C}{\alpha}\right\}$, $c^I \in \left\{0,1,\cdots,C-\alpha c^O\right\}$ and $d\in \left\{0,1,\cdots,\frac{N\bar{E}}{k f_V^2D^Iw}\right\}$, respectively. Considering that the projection tuple ($D^I,D^O,w, P_i, \tau$) of each viewpoint $i\in \mathcal{N}$ is the same, for any given $c^O$, we 
can let
\begin{equation}\label{num3D}
c_i^O =
\begin{cases}
1& \text{$i = 1, \cdots, c^O$,}\\
0& \text{otherwise,}
\end{cases}
\end{equation}
without loss of optimality.

We first obtain the optimality property between local 2D and 3D FOV caching.
\begin{property}\label{struc1}
For any $i\in \mathcal{N}$ such that $c_i^O=1$, we have $c_i^I=0$.
\end{property}
This property indicates that if  3D FOV of viewpoint $i$ is already cached at the mobile VR device, there is no need to cache the 2D FOV, since the request for viewpoint $i$ can be directly served from  local cache. 
 \begin{property}\label{strucC}
 For any given $c^O$, we have $c^I = C-\alpha c^O$.
 \end{property}

Property~\ref{strucC} can be obtained by observing that the equality holds in the cache size constraint (\ref{cc1}) for minimizing the required transmission rate. Based on Property~\ref{struc1} and Property~\ref{strucC}, when $\textbf{c}^O$ is given by (\ref{num3D}), we can let
\begin{equation}\label{num2D}
c_i^I =
\begin{cases}
0& \text{$i = 1, \cdots, c^O$,}\\
1& \text{$i = c^O+1,\cdots,c^O+c^I$,}\\
0& \text{otherwise},
\end{cases}
\end{equation}
where $c^I = C-\alpha c^O$. 

We next analyze the optimality between local caching and local computing as follows.

 \begin{property}\label{optimalrelation2}
 For any viewpoint $i\!\in\! \mathcal{N}$, we have $c_i^O+d_i \leq 1$ and $c_i^I \leq d_i$.
 \end{property}
 Property~\ref{optimalrelation2} can be obtained by contradiction. First, suppose that $c_i^O+d_i > 1$. Then, when $c_i^O=1$, we have $d_i=1$. However, when $c_i^O=1$, by setting $d_i$ from $1$ into $0$, $\bar{R}$ does not change and computing cost is saved. Thus, $c_i^O+d_i > 1$ is not optimal. Secondly, suppose that $c_i^I > d_i$. Then, when $d_i=0$, we have $c_i^I=1$. However, when $d_i=0$, by setting $c_i^I$ from $1$ into $0$, based on (\ref{averagerate111}), $\bar{R}$ does not change and caching cost is saved. Thus, $c_i^I>d_i$ is not optimal. 

Property~\ref{optimalrelation2} indicates that if 3D FOV of viewpoint $i$ is already cached at the mobile VR device,  there is no gain from local computing, since the request for viewpoint $i$ can be directly served from the local cache. Similarly, if 2D FOV is already cached at the mobile VR device, it would be a waste of caching resource if the locally cached 2D FOV is not utilized to compute the projection component at the mobile VR device. 

Based on Property~\ref{optimalrelation2}, when $\textbf{c}^O$ and $\textbf{c}^I$ are given by (\ref{num3D}) and (\ref{num2D}), for any given $d$, we can let

\begin{equation}\label{numOffload}
d_i =
\begin{cases}
0& \text{$i = 1, \cdots, c^O$,}\\
1& \text{$i = c^O+1,\cdots,c^O+d$,}\\
0& \text{otherwise}.
\end{cases}
\end{equation}

Finally, for ease of structural property analysis, by rewriting $\textbf{c}^O$, $\textbf{c}^I$ and $\textbf{d}$ as (\ref{num3D}), (\ref{num2D}) and (\ref{numOffload}), Problem \ref{Prob1} is equivalent to Problem~\ref{Prob2}.

\begin{Prob}[Equivalent Joint Policy Optimization]\label{Prob2}
\begin{align}
& \min_{c^O,c^I,d} R_S - \frac{R_S}{N} c^O - \frac{R_S}{N} \min \left\{c^I, d\right\} \nonumber \\
&\hspace{40mm}-  \frac{R_S-R_V}{N}\left(d-\min \left\{c^I,d\right\}\right) \nonumber\\
&\  s.t.\ \ \ \ \ \ \  \ \ \ \ \ \ \ \ \ \ \ \ \ \ \ \ c^I \in \left\{0,1,\cdots,C\right\}, \\
& \ \ \ \ \ \ \  \ \ \ \ \ \ \ \ \ \ \ \ \ \ \ \ \ \ \ \ \ c^O = \frac{C - c^I}{\alpha}, \\
& \ \ \ \ \ \ \  \ \ \ \ \ \ \ \ \ \ \ \ \ \ \ \ \  d \in \left\{0,1,\cdots,\frac{N\bar{E}}{k f_V^2D^Iw}\right\}.
\end{align}
\end{Prob}
\noindent
Denote with $\left(c^{O^*},c^{I^*},d^*\right)$ the optimal solution to Problem~\ref{Prob2}. Based on (\ref{num3D}), (\ref{num2D}) and (\ref{numOffload}), we can obtain the corresponding optimal joint policy, i.e., $\left(\mathbf{c}^{O^*}, \mathbf{c}^{I^*}, \mathbf{d}^*\right)$.

From the objective function of Problem \ref{Prob2}, we note that the first term, i.e., $R_S$, corresponds to the average transmission rate required via MEC computing and without local caching or local computing. 
 The second term, i.e., $\frac{R_S}{N} c^O$, corresponds to the local 3D caching gain, which increases with the number of locally cached 3D FOVs, i.e., $c^O$. The third term, i.e., $\frac{R_S}{N} \min \left\{c^I,d\right\}$, corresponds to the local  computing gain with local 2D caching, which increases with the minimum of the number of locally cached 2D FOVs, i.e., $c^I$, and that of locally computed projections, i.e., $d$. The last term, i.e., $\frac{R_S-R_V}{N}\left(d-\min \left\{c^I,d\right\}\right)$, corresponds to the local computing gain without local caching, which depends on the difference between $R_S$ and $R_V$. Note that if $f_V < F \triangleq \frac{D^Ow}{(\alpha-1)\tau}$, $R_S < R_V$ and the local computing gain without local caching is negative. Thus, we name $f_V < F$ as \emph{local computing limited region}. Otherwise, the local computing gain without local caching is positive and we name $f_V \geq F $ as \emph{MEC computing limited region}. \textcolor{black}{In summary, the total number of viewpoint requests that can be served locally is $c^O+d$. For the interest of joint caching and computing design, we assume that $\frac{C}{\alpha} + \frac{N\bar{E}}{kf_V^2D^Iw} \leq N$.}

\section{Optimal Policy and Tradeoff Analysis}

In this section, we obtain the optimal joint caching and computing policy and the minimum transmission rate, yielding the fundamental relationship between communications, caching and computing, defined as \textbf{3C tradeoff}, in the local computing limited region, i.e., $f_V<F$, and MEC computing limited region, i.e., $f_V \geq F$, respectively.

\subsection{Local Computing Limited Region}
\begin{theorem}[Optimal joint policy and 3C tradeoff when $f_V \!< F$]\label{ror1}
The optimal joint policy $\left(c^{O^*}\!,c^{I^*}\!, d^*\right)$ is given as
\begin{equation}\label{3D1}
c^{O^*} = \frac{C - c^{I^*}}{\alpha},
\end{equation}
\begin{equation}\label{2D1}
\ \ \ \ \ \ \ \ \ \ \ \ \ \ c^{I^*} = \min \left\{C, \frac{N\bar{E}}{k f_V^2D^Iw}\right\},
\end{equation}
\begin{equation}\label{off1}
\ \ \ \ \ \ \ \ \ \ \ \ \ d^* = \min \left\{C, \frac{N\bar{E}}{k f_V^2D^Iw}\right\},
\end{equation}
and the minimum transmission rate $R^*$ is given as
\begin{equation}\label{rate1}
R^* = R_S - \frac{R_S}{N}\left(\frac{C}{\alpha}+\left(1-\frac{1}{\alpha}\right)\min \left \{C, \frac{N\bar{E}}{k f_V^2D^Iw}\right\}\right).
\end{equation}
\end{theorem}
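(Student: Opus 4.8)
The plan is to work entirely with the two–variable reformulation in Problem~\ref{Prob2}. Eliminating $c^O$ through the equality $c^O=(C-c^I)/\alpha$ and abbreviating $M\triangleq N\bar{E}/(kf_V^2D^Iw)$ for the (integer) computing capability, the objective becomes
\[
g(c^I,d)=R_S-\frac{R_SC}{N\alpha}+\frac{R_S}{N\alpha}\,c^I-\frac{R_S}{N}\min\{c^I,d\}+\frac{R_V-R_S}{N}\bigl(d-\min\{c^I,d\}\bigr),
\]
to be minimized over integers $c^I\in\{0,\dots,C\}$ and $d\in\{0,\dots,M\}$. The single structural fact that pins down this region is that $f_V<F$ is equivalent to $R_S<R_V$ (as already noted just before the theorem), so the coefficient $\tfrac{R_V-R_S}{N}$ of the last term is strictly positive.

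The first step is to show that an optimizer can be taken with $d\le c^I$. If $d>c^I$ at some feasible point, then $c^I<d\le M$, so replacing $(c^I,d)$ by $(c^I,c^I)$ is still feasible; this leaves $c^O$ and $\min\{c^I,d\}=c^I$ unchanged while replacing the last term $\tfrac{R_V-R_S}{N}(d-c^I)>0$ by $0$, hence strictly lowers $g$. So we may assume $d\le c^I$, in which case $\min\{c^I,d\}=d$ and $g$ collapses to the affine function $g(c^I,d)=R_S-\tfrac{R_SC}{N\alpha}+\tfrac{R_S}{N\alpha}c^I-\tfrac{R_S}{N}d$ on the integer box $\{0\le d\le c^I\le C,\ d\le M\}$.

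The second step is just monotonicity of this affine function. Its $c^I$–coefficient is positive and its $d$–coefficient is negative, so for each fixed $d$ the minimum is at the smallest admissible value $c^I=d$; restricted to the diagonal $c^I=d$ the objective equals $R_S-\tfrac{R_SC}{N\alpha}-\tfrac{R_S}{N}\bigl(1-\tfrac{1}{\alpha}\bigr)d$, which is strictly decreasing in $d$ because $\alpha\ge2$. Thus $d$ is pushed to the largest value consistent with $d\le c^I\le C$ and $d\le M$, giving $d^{\ast}=c^{I^{\ast}}=\min\{C,M\}$ (attained since $C$ and $M$ are integers) and then $c^{O^{\ast}}=(C-c^{I^{\ast}})/\alpha$; this is exactly (\ref{3D1})--(\ref{off1}). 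Substituting $c^{I^{\ast}}=d^{\ast}=\min\{C,M\}$ into $g$ yields $R^{\ast}=R_S-\tfrac{R_S}{N}\bigl(\tfrac{C}{\alpha}+(1-\tfrac{1}{\alpha})\min\{C,M\}\bigr)$, which is (\ref{rate1}) once $M$ is written out.

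The delicate point, and the one I would treat most carefully, is the case split at $\min\{c^I,d\}$ in the first step: one must check that collapsing $d$ down to $c^I$ preserves feasibility (it does, since $c^I<d\le M$) and never increases the objective, which is precisely where the hypothesis $R_S<R_V$ of the local–computing–limited region enters. In the complementary region $R_S\ge R_V$ the sign of that term flips and both the argument and the optimizer change, which is exactly why Theorem~\ref{ror1} is stated only for $f_V<F$; everything past the reduction to $d\le c^I$ is minimization of an affine function over a box and needs no further estimates.
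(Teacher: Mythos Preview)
Your proof is correct and follows essentially the same route as the paper's Appendix~A: both use $R_S<R_V$ to force $d\le c^I$, substitute $c^O=(C-c^I)/\alpha$ to reduce to the same affine objective $R_S(1-\tfrac{C}{\alpha N})+\tfrac{R_S}{\alpha N}c^I-\tfrac{R_S}{N}d$, and then exploit monotonicity in $c^I$ and $d$. The only cosmetic difference is that the paper splits into the two cases $c^I\le M$ and $c^I\ge M$ before optimizing, whereas you argue directly on the whole constraint set; your presentation is slightly cleaner but the underlying argument is the same.
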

\begin{proof}
Proof can be seen in Appendix~A.
\end{proof}
\begin{remark}[Tradeoff analysis when $f_V < F$]

When $f_V < F$, $R_S < R_V$ and from the objective function of Problem~\ref{Prob2}, we can see that the performance gain comes from local 3D caching and local computing with local 2D caching, but the local computing gain without local caching, i.e., $\frac{R_S-R_V}{N}\left(d\!-\!\min\left\{c^I,d\right\}\right)$, is negative. Specifically, from the optimal computing policy in (\ref{off1}), the optimal number of locally computed projections $d^*$ is jointly limited by the local caching and computing capabilities, since local computing without local 2D caching cannot bring rate gain. From the optimal caching policy in  (\ref{3D1}) and (\ref{2D1}), we can see that local computing with local 2D caching is chosen first and then local 3D caching is chosen if there still remains underutilized storage size, which indicates that caching resource is exploited more efficiently joint with computing resource, and vice versa. This is because the caching cost for local 2D caching of each FOV, i.e., $D^I$, is smaller than that for local 3D caching of each FOV, i.e., $D^O=\alpha D^I$.

\end{remark}

\begin{figure}[t]
\begin{center}
 \includegraphics[width = 8.5cm]{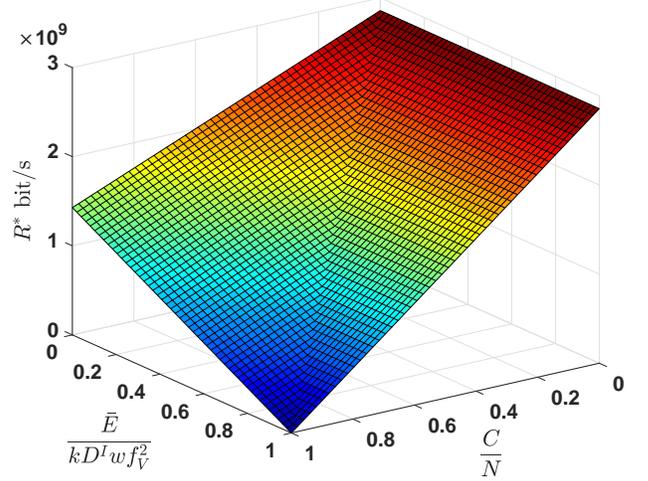}
\end{center}
 \caption{\small{3C tradeoff under latency constraint when $f_V < F$, where \textbf{any $(R^*,\ \frac{C}{N},\ \frac{\bar{E}}{kf_V^2D^Iw})$ point in this 3D figure achieves $\tau=20$ ms}. $D^I = 25$ M bits, $w = 10$ cycles/bit, $D^O = 50$ M bits, $N = 6 \times 10^4$, $k = 10^{-27}$, $f_V = 0.7*F$. }
 }
\label{3C1}
\end{figure}
As illustrated in Fig.~\ref{3C1}, we can see that $R^*$ first decreases with  the local computing capability and then remains unchanged when the local computing capability is larger than local caching capability. Thus, the caching capability facilitates the utilization of the local computing capability when $f_V < F$.

Based on Theorem~\ref{ror1}, we analyze the impacts of cache size $C$ and local computation frequency $f_V$ on the average transmission rate $R^*$ by plotting numerical results.


Fig.~\ref{case1} (a) illustrates the impact of the cache size $C$ on the optimal rate $R^*$ when $f_V<F$.  We can see that the decreasing rate of $R^*$ w.r.t. $C$ depends on the relationship between the caching $C$ and computing $\frac{N\bar{E}}{k f_V^2D^Iw}$ capabilities of the mobile VR device. When $C \leq \frac{N\bar{E}}{k f_V^2D^Iw}$, $R^*$ decreases with $C$ at the rate of $\frac{R_S}{N}$ since the caching gain comes from local computing with local 2D caching. Otherwise, $R^*$ decreases with $C$ at the rate of $\frac{R_S}{\alpha N}$ since the caching gain comes from local 3D caching.

Fig.~\ref{case1} (b) illustrates the impact of the computation frequency $f_V$ on the optimal rate $R^*$ when $f_V<F$. We can see that $R^*$ increases with $f_V$ and $k$. This is because increasing $f_V$ or $k$ decreases the number of projections that can be computed at the mobile VR device. Since increasing $k$ corresponds to decreasing the power efficiency, we learn that improving the power efficiency of the mobile VR device can help facilitate utilizing the local computing resource and thereby reduce the transmission rate requirement. In addition, we observe that $R^*$ decreases with $\bar{E}$ since the increase of $\bar{E}$ increases the number of projections that can be computed at the mobile VR device.
\begin{figure}[t]
\begin{center}
 \subfigure[Cache size when $\frac{\bar{E}}{kf_V^2D^Iw} = 30\%$.]
  {\resizebox{7.5cm}{!}{\includegraphics{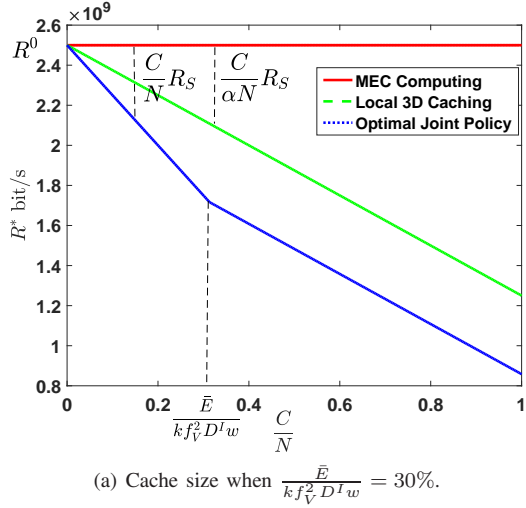}}}\quad\quad
 \subfigure[Computation frequency when $\frac{C}{N} = 30\%$.]
 {\resizebox{7.5cm}{!}{\includegraphics{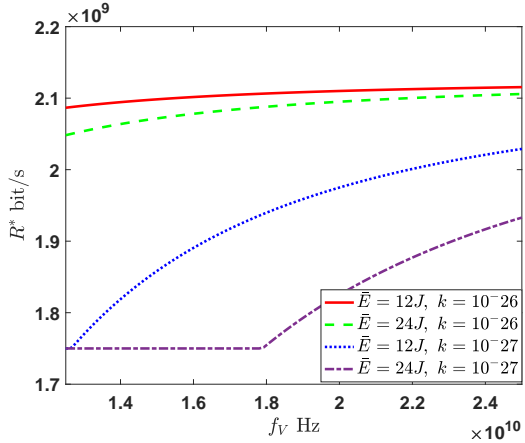}}}\quad\quad
\end{center}
   \caption{\small{Local cache size and computation frequency when $f_V < F$. Parameters are the same as those in Fig.~\ref{3C1}}.}
\label{case1}
\end{figure}

\subsection{MEC Computing Limited Region}
\begin{theorem}[Optimal joint policy and 3C tradeoff when $F\! \leq\! f_V$]\label{ror2}
The optimal joint policy, i.e., $(c^{O^*}\!, c^{I^*}\!,d^*)$,  is given as
\begin{equation}\label{3D2}
c^{O^*} = \frac{C - c^{I^*}}{\alpha},
\end{equation}
\begin{equation}\label{2D2}
\ \ \ \ \ \ \ \ \ \ \ \ \ \ \ \ \ \ \ \ \ \ \ \ \ \ c^{I^*} = \min \left\{C, \frac{N\bar{E}}{k f_V^2D^Iw}\right\},
\end{equation}
\begin{equation}\label{off2}
 d^* = \frac{N\bar{E}}{k f_V^2D^Iw},
\end{equation}
and the minimum transmission rate $R^*$ is given as
\begin{align}\label{rate2}
&R^* = R_S - \frac{R_S}{N}\left(\frac{C}{\alpha} + \left(1-\frac{1}{\alpha}\right) \min \left\{C, \frac{N\bar{E}}{k f_V^2D^Iw}\right\}\right) \nonumber\\
&\hspace{4mm}- \frac{R_S-R_V}{N}\left(\frac{N\bar{E}}{k f_V^2D^Iw}-\min \left\{C, \frac{N\bar{E}}{k f_V^2D^Iw}\right\}\right).
\end{align}
\end{theorem}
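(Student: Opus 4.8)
The plan is to solve the equivalent Problem~\ref{Prob2} directly, using that $f_V \ge F$ is equivalent to $R_S \ge R_V$. Since minimizing $\bar R$ is the same as maximizing the total gain
\[
G \triangleq \frac{R_S}{N}c^O + \frac{R_S}{N}\min\{c^I,d\} + \frac{R_S-R_V}{N}\bigl(d-\min\{c^I,d\}\bigr),
\]
I would first record its piecewise form: when $d\le c^I$ it equals $\frac{R_S}{N}(c^O+d)$, and when $d> c^I$ it equals $\frac{R_S}{N}c^O + \frac{R_S}{N}c^I + \frac{R_S-R_V}{N}(d-c^I)$, with the two expressions agreeing at $d=c^I$.

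First I would optimize over $d$ with $(c^O,c^I)$ fixed. In the regime $R_S\ge R_V$ both marginal rates, $\frac{R_S}{N}$ on the branch $d\le c^I$ and $\frac{R_S-R_V}{N}$ on the branch $d> c^I$, are nonnegative, so $G$ is nondecreasing in $d$; hence the energy budget should be used in full, giving $d^* = \frac{N\bar E}{kf_V^2D^Iw}$, which is~(\ref{off2}). Feasibility of this choice follows from the integrality of $\frac{N\bar E}{kf_V^2D^Iw}$ and the standing assumption $\frac{C}{\alpha}+\frac{N\bar E}{kf_V^2D^Iw}\le N$.

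Next, with $d$ fixed at $d^*$ and $c^O=\frac{C-c^I}{\alpha}$ substituted in, I would treat $G$ as a function of $c^I\in\{0,\dots,C\}$. A short computation shows it is piecewise linear with a single kink at $c^I=d^*$: the slope is $\frac1N\bigl(R_V-\frac{R_S}{\alpha}\bigr)$ for $c^I\le d^*$ and $-\frac{R_S}{N\alpha}<0$ for $c^I\ge d^*$. The key point is that $\frac{R_S}{\alpha}=\frac{D^O}{\alpha\tau}=\frac{D^I}{\tau}<\frac{D^I}{\tau-D^Iw/f_V}=R_V$, so the first slope is strictly positive; the objective is therefore concave in $c^I$ and is maximized at $c^I=\min\{C,d^*\}$, i.e. $c^{I^*}=\min\bigl\{C,\frac{N\bar E}{kf_V^2D^Iw}\bigr\}$, giving~(\ref{2D2}), and the cache-size equality then forces~(\ref{3D2}). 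Finally I would plug $(c^{O^*},c^{I^*},d^*)$ into $R^*=R_S-G$ and simplify, distinguishing the case $\min\{C,d^*\}=C$ from $\min\{C,d^*\}=d^*$, to reach the closed form~(\ref{rate2}).

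The main obstacle is bookkeeping rather than insight: keeping track of which branch of $\min\{c^I,d\}$ is active through the two-stage optimization, and confirming that the $c^I$-objective is concave (positive slope then negative) so that the kink point is genuinely the maximizer. The one place where the model's specific quantities matter is the inequality $R_V>R_S/\alpha$, which underlies the ``cache 2D before 3D'' ordering; that, together with the assumed integrality of $d^*$ and $\min\{C,d^*\}$ so that the stated optimizers are admissible, is what needs to be checked carefully.
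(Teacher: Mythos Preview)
Your proposal is correct and follows essentially the same route as the paper's proof in Appendix~B. The paper splits into the two cases $c^I\le d$ and $c^I\ge d$ and optimizes $(c^I,d)$ jointly in each, while you instead fix $(c^O,c^I)$ and optimize $d$ first, then optimize $c^I$; both rely on the same key inequality $R_S<\alpha R_V$ (equivalently your $R_V>R_S/\alpha$) to determine the sign of the relevant slope, and arrive at the same optimizer.
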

\begin{proof}
Proof can be seen in Appendix~B.
\end{proof}
\begin{remark}[Tradeoff analysis when $F \leq f_V$]
When $F \leq f_V$, $R_S \geq R_V$ and we can see that the performance gain comes from local 3D caching, local computing with local 2D caching as well as  local computing without local caching. 
Specifically, from the optimal computing policy in (\ref{off2}), the optimal number of locally computed projections $d^*$ is only limited by the computing capability, i.e., $\frac{N\bar{E}}{k f_V^2D^Iw}$, since local computing without local 2D caching can also bring performance gain. From the optimal caching policy in  (\ref{3D2}) and (\ref{2D2}),  the local computing with local 2D caching is chosen first and then the 3D caching is chosen if there still remains underutilized storage size. The reasons lie in the following two aspects. First, the caching cost for local 2D caching of each FOV, i.e., $D^I$, is smaller than that for local 3D caching of each FOV, i.e., $D^O=\alpha D^I$. Secondly, when $ F\leq f_V$, the gain from local computing without caching, i.e., $R_S-R_V$, is not large enough to compensate for the caching resource waste for 3D caching compared with 2D caching.
\end{remark}
\begin{figure}[t]
\begin{center}
 \includegraphics[width = 8cm]{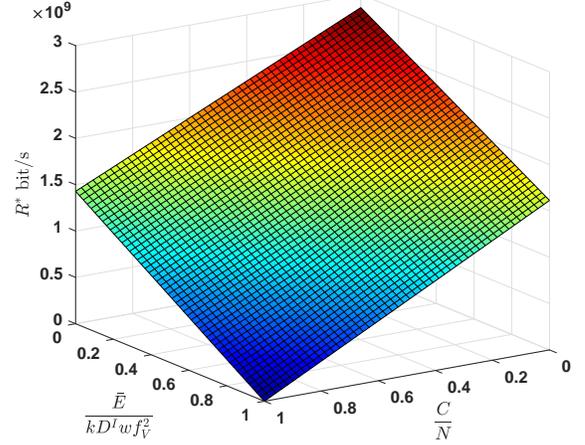}
\end{center}
 \caption{\small{3C Tradeoff when $F \leq f_V$, where \textbf{any $(R^*,\ \frac{C}{N},\ \frac{\bar{E}}{kf_V^2D^Iw})$ point in this 3D figure can achieve $\tau=20$ ms}. $f_V = 1.2*F$. Other parameters are the same as those in Fig.~\ref{3C1}.}
 }
\label{3C2}
\end{figure}
As illustrated in Fig.~\ref{3C2}, we can see that $R^*$ monotonically decreases with  the local computing capability and the decreasing rate increases with the cache capability ($43 \%$ when $\frac{C}{N}=0$ and $100 \%$ when $\frac{C}{N}=1$). It demonstrates that the caching capability facilitates the utilization of the local computing capability. On the other hand, $R^*$ monotonically decreases with the local caching capability and the decreasing rate increases with the local computing capability ($50 \%$ when $\frac{\bar{E}}{kf_V^2D^Iw}=0$ and $100\%$ when $\frac{\bar{E}}{kf_V^2D^Iw}=1$). It also demonstrates that the local computing capability facilitates the utilization of the caching capability.

Based on Theorem~\ref{ror2}, we analyze the impacts of cache size, i.e., $C$, and local computation frequency, i.e., $f_V$, on $R^*$ via plotting numerical results.


\begin{figure}[t]
\begin{center}
 \subfigure[Cache size when $\frac{\bar{E}}{kf_V^2D^Iw} = 30\%$.]
  {\resizebox{7.5cm}{!}{\includegraphics{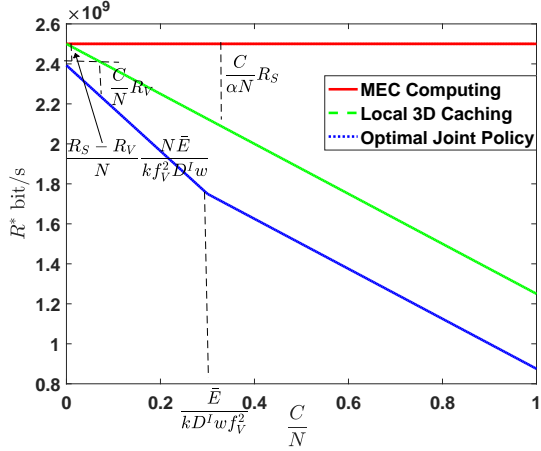}}}\quad\quad
 \subfigure[Computation frequency when $\frac{C}{N} = 30 \%$.]
 {\resizebox{7.5cm}{!}{\includegraphics{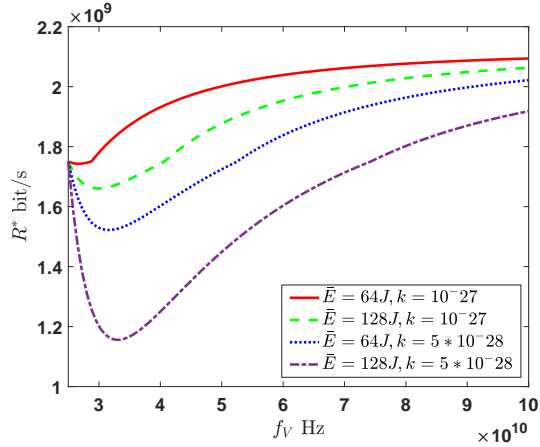}}}\quad\quad
\end{center}
   \caption{\small{Local cache size and computation frequency when $F \leq f_V$. Parameters are the same as those in Fig.~\ref{3C2}}.}
\label{case2}
\end{figure}
Fig.~\ref{case2} (a) illustrates the impact of $C$ on the optimal rate $R^*$ when $F \leq f_V$. We can see that it exhibits similar structure to that when $f_V < F$. The decreasing rate of $R^*$ w.r.t. $C$ depends on the relationship between the caching $C$ and computing $\frac{N\bar{E}}{k f_V^2D^Iw}$ capabilities of the mobile VR device, i.e., $\frac{R_V}{N}$ when $C \leq \frac{N\bar{E}}{k f_V^2D^Iw}$ and $\frac{R_S}{\alpha N}$ otherwise.
Note that when $C=0$, there still exists performance gain, i.e., $\frac{R_S-R_V}{N} \frac{N\bar{E}}{k f_V^2D^Iw}$, due to the  local computing without local 2D caching.

Fig.~\ref{case2} (b) illustrates the impact of $f_V$ on the optimal rate $R^*$ when $F \leq f_V$.  We can see that $R^*$ first decreases and then increases with $f_V$. This is mainly due to the fact that when $f_V$ is relatively small, increasing $f_V$ alleviates the transmission rate requirement by reducing the local computation latency, while when $f_V$ is relatively large, increasing $f_V$ decreases the number of projections that can be computed at the mobile VR device. In addition, we can see that $R^*$ decreases with $\bar{E}$ and increases with $k$, which demonstrates again that improving power efficiency of the mobile VR device can help facilitate the local computing gain. From the first-order derivative of $R^*$ w.r.t. $f_V$, we obtain the following remark.
\begin{remark}\label{cof2}
 When $C = 0$ and $F \leq f_V$, $f_V^*$ minimizing $R^*$ is given by
\vspace{-2mm}
\begin{equation}\label{f_V}
 f_V^* = \Big(1-\frac{D^I}{4R_S\tau} \Big)F + \sqrt{\Big(1-\frac{D^I}{4R_S\tau}\Big)^2F^2-\frac{D^Iw}{\tau}F}.
\end{equation}
\end{remark}
Equation (\ref{f_V}) indicates that the optimal computation frequency is independent of the energy $\bar{E}$ and power efficiency $k$ of the mobile device, and depends on the projection parameters $(D^I,D^O,w,\tau)$ only.

\section{Problem Formulation in Heterogeneous Scenario}\label{heterogeneous}

In this section, we consider a heterogeneous scenario, where the parameters of each viewpoint $i \in \mathcal{N}$, generalized as $(D_i^I,D_i^O,w_i,\tau_i, P_i)$, are different from each other. 
Similar to Problem~\ref{Prob2}, 
the optimization problem is formulated as below.

\begin{Prob}[Joint Policy Optimization in Heterogeneous Scenario]\label{generalProb}
\begin{align}
\ \ \  \min_{{\mathbf{c}^I,\mathbf{c}^O,\mathbf{d}}}&\ \ \  \sum_{i=1}^NP_i\left(R_i^S\left(1-d_i\right) + R_i^Vd_i\left(1-c_i^I\right)\right)\left(1-c_i^O\right) \nonumber \\
s.t. \ \  &\ \ \ \ \ \ \ \ \ \ \ \ \ \sum_{i=1}^N P_ikf_V^2D_i^Iw_id_i \leq \bar{E}, \\
  \ \ \ & \ \ \ \ \ \ \ \ \ \ \ \ \ \sum_{i=1}^N D_i^Ic_i^I+\alpha D_i^Ic_i^O \leq C' ,\\\nonumber
\ \ \   & \ \ \ \ \ \ c_i^O\in \{0,1\},\ c_i^I \in \{0,1\},\ d_i \in \{0,1\},\ i \in \mathcal{N},\nonumber
\end{align}
\end{Prob}
\noindent
where $R_i^S \triangleq \frac{D_i^O}{\tau_i}$ (\emph{in bit/s}) and $R_i^V \triangleq \frac{D_i^I}{\tau_i-\frac{D_i^Iw_i}{f_V}}$ (\emph{in bit/s}) denote the minimally required transmission rates to satisfy the latency constraint when the projection of viewpoint $i \in \mathcal{N}$ is computed at the MEC server and at the mobile VR device, respectively. The objective function is obtained directly via generalizing (\ref{averagerate111}). 
$C'$ (\emph{in bit}) denotes the cache size at the mobile VR device.

\begin{table}[t]
\caption{Gains vs. Caching and Computing Costs in Heterogeneous Scenario}\label{tradeoff2}
\newcommand{\tabincell}[2]{\begin{tabular}{@{}#1@{}}#2\end{tabular}}
\newcommand{\tl}[1]{\multicolumn{1}{l}{#1}} 
\renewcommand\arraystretch{1}
\begin{center}
\setlength{\tabcolsep}{0.8mm}{
\begin{tabular}{clccc} 
\hline
Route & Joint Decision & Rate Gain & Caching & Computing \\
\hline
Route~1&\tabincell{l}{Local 3D caching\\$c_i^O=1, c_i^I=0, d_i=0$} & $P_iR_i^S$ & $\alpha D_i^I$ & $0$\\
\hline
Route~2&\tabincell{l}{Local computing \\ with local 2D caching\\$c_i^O=0, c_i^I=1, d_i=1$}& $P_iR_i^S$ & $D_i^I$ & $P_ikD_i^Iw_if_V^2$\\
\hline
Route~3&\tabincell{l}{Local computing\\ without local caching\\$c_i^O=0, c_i^I=0,d_i=1$} & $P_i\left(R_i^S\!-R_i^V\right)$ & $0$ & $P_ikD_i^Iw_if_V^2$ \\
\hline
Route~4 & \tabincell{l}{MEC computing \\ $c_i^O=0,c_i^I=0, d_i=0$}& $0$ & $0$ & $0$\\
\hline
\end{tabular}}
\end{center}
\label{a}
\end{table}%

%
For each viewpoint $i\in \mathcal{N}$,  we list the transmission rate gain compared with the MEC computing, local caching and computing costs in Table~\ref{tradeoff2} of each service route, which is obtained via directly generalizing Table~\ref{tradeoff}. In the following, we will show that Problem~\ref{generalProb} is NP-hard in strong sense and transform Problem~\ref{generalProb} into an equivalent IQP, which can be solved via CCCP efficiently. 

%
%

\subsection{Computational Intractability}
To show that Problem~\ref{generalProb} is NP-hard in strong sense, we transform Problem~\ref{generalProb} into a multiple choice multiple dimensional knapsack problem (MMKP) equivalently. For each viewpoint $i\in \mathcal{N}$ and service route $j \in \{1,2,3,4\}$, introduce variable $x_{i,j} \in \{0,1\}$ where $x_{i,j}=1$ indicates that the request for viewpoint $i$ is served via the $j$-th route and $x_{i,j}=0$ otherwise.  Based on Table~\ref{a}, $\left(\mathbf{c}^{O^*}, \mathbf{c}^{I^*}, \mathbf{d}^*\right)$ can be obtained from $(x_{i,j})_{i\in \mathcal{N}, j \in \{1,2,3,4\}}$, and without loss of equivalence, Problem~\ref{generalProb} can be rewritten as Problem~\ref{mmkp}.
\begin{Prob}[Equivalent Joint Policy Optimization]\label{mmkp}
\begin{align}
& \max_{(x_{i,j})_{i\in \mathcal{N},j\in \{1,2,3,4\}}} \ \ \ \ \ \ \  \ \ \  \sum_{i = 1}^N \sum_{j=1}^4 v_{i,j}x_{i,j}\nonumber \\
&\  \ \ \ \ \ \ s.t. \ \ \ \ \ \ \  \ \ \ \ \ \ \ \ \sum_{i=1}^N\sum_{j=1}^4 w_{i,j}^1x_{i,j} \leq C', \label{cache1}\\
&\ \ \ \ \ \ \  \ \ \ \ \ \ \ \ \ \ \ \ \ \ \ \ \ \ \ \sum_{i=1}^N\sum_{j=1}^4 w_{i,j}^2x_{i,j} \leq \bar{E},\label{energy2}\\
&\ \ \ \ \ \ \  \ \ \ \ \ \ \ \ \ \ \ \ \ \ \ \ \ \ \ \sum_{j=1}^4 x_{i,j} = 1, \ i \in \mathcal{N},\label{sum}\\
&\ \ \ \ \ \ \  \ \ \ \ \ \ \ \ \ \ \ x_{i,j} \in \{0,1\}, \ i \in \mathcal{N},\ j \in \{1,2,3,4\}\label{binary},
\end{align}
\end{Prob}
where
\begin{equation}\label{value}
\ \ \ \ \ \ \ \ \ \ \ v_{i,j} \triangleq
\begin{cases}
P_iR_i^S& \ \ \ \ \ \text{$j=1,2$,}\\
P_i(R_i^S-R_i^V)& \ \ \ \ \ \text{$j=3$,}\\
0 & \ \ \ \ \ \text{$j=4$,}
\end{cases}
\end{equation}
denotes the profit value for the choice of $j$ for viewpoint $i$,
\begin{equation}\label{value}
\ \ \ \ \ \ \ \ \ \ w_{i,j}^1 \triangleq
\begin{cases}
\alpha D_i^I& \ \ \ \ \ \ \ \ \ \ \ \ \ \ \ \ \text{$j=1$,}\\
D_i^I & \ \ \ \ \ \ \ \ \ \ \ \ \ \ \ \ \text{$j=2$,}\\
0 & \ \ \ \ \ \ \ \ \ \ \ \ \ \ \ \ \text{$j=3,4$,}
\end{cases}
\end{equation}
denotes the caching cost for the choice of $j$ for viewpoint $i$, and
\begin{equation}\label{value}
\ \ \ \ \ \ \ \  \ w_{i,j}^2 \triangleq
\begin{cases}
P_i kD_i^Iw_if_V^2& \ \ \ \ \ \ \  \text{$j=2,3$,}\\
0 & \ \ \ \ \ \ \ \text{$j=1,4$,}\\
\end{cases}
\end{equation}
denotes the energy cost for the choice of $j$ for viewpoint $i$.

We can see that Problem~\ref{mmkp} corresponds to a 4-choice 2-dimensional  knapsack problem. Since MMKP is NP-hard in strong sense \cite{mmkp}, we conclude that Problem~\ref{generalProb} is NP-hard in strong sense.

\subsection{Equivalent IQP and CCCP}
In the following, we transform Problem~\ref{mmkp} into an equivalent linearly constrained IQP and solve it using CCCP. First, without loss  of equivalence, (\ref{binary}) can be rewritten as
\begin{equation}\label{aa}
x_{i,j} \in [0,1], \ i \in \mathcal{N},\ j \in \{1,2,3,4\},
\end{equation}
\begin{equation}\label{b}
\sum_{i=1}^N \sum_{j=1}^4 x_{i,j}(1-x_{i,j}) \leq 0.
\end{equation}

Then, by substituting (\ref{binary}) with (\ref{aa}) and (\ref{b}), we transform Problem~\ref{mmkp} into Problem~\ref{equimmkp} equivalently.
\begin{Prob}[Equivalent Joint Policy Optimization]\label{equimmkp}
\begin{align}
& \min_{(x_{i,j})_{i\in \mathcal{N},j\in \{1,2,3,4\}}} \ \ \ \ \ \ \  \sum_{i = 1}^N \sum_{j=1}^4 -v_{i,j}x_{i,j} \nonumber \\
& \ \ \ \ \ \ \ \ \ s.t.  \ \ \ \ \ \ \  \ \   (\ref{cache1}), (\ref{energy2}), (\ref{sum}),(\ref{aa}),(\ref{b}). \nonumber
\end{align}
\end{Prob}

Note that Problem~\ref{equimmkp} is a continuous optimization problem, the computation complexity of which is much less compared with that of solving Problem~\ref{mmkp} directly. However, considering \!$\sum_{i=1}^N \!\sum_{j=1}^4\! x_{i,j}(1-x_{i,j})$ in (\ref{b}) is a concave function, (\ref{b}) is not a convex constraint and thus obtaining an efficient algorithm for solving Problem~\ref{equimmkp} is still very challenging.

Next, to facilitate the solution, we transform Problem~\ref{equimmkp} into Problem~\ref{penalized} by penalizing the concave constraint in (\ref{b}) to the objective function.

\begin{Prob}[Penalized Joint Policy Optimization]\label{penalized}
\begin{align}
& \min_{(x_{i,j})_{i\in \mathcal{N}, j\in \{1,2,3,4\}}}  \sum_{i = 1}^N \sum_{j=1}^4 -v_{i,j}x_{i,j}-\mu \sum_{i=1}^N\sum_{j=1}^4x_{i,j}(x_{i,j}-1)\nonumber \\
&\ \ \ \ \ \ \ \ \ s.t.  \ \ \ \ \ \ \  \ \ \ \ \ \ \ \ (\ref{cache1}), (\ref{energy2}), (\ref{sum}),(\ref{aa}), \nonumber
\end{align}
with the penalty parameter $\mu>0$. Denote with $\bar{R}(\mu)$ the optimal objective value.
\end{Prob}

Note that the objective function of Problem~\ref{penalized} is a difference of a linear function and a quadratic convex function, and the constraints of Problem~\ref{penalized} are linear. From \cite{localoptima}, Problem~\ref{penalized} is an IQP, a special case of general difference of convex problem, and the local optima of Problem~\ref{penalized} can be obtained in finite steps via DC algorithms (DCA). In addition, since the second term of the objective function of Problem~\ref{penalized} is differentiable, DCA exactly reduces to CCCP \cite{dcacccp}, as shown in Algorithm~\ref{dca}. CCCP involves iteratively solving a sequence of convex problems, each of which is obtained via linearizing the second term of the objective function of IQP. Specifically,  at each iteration $t$,  we approximate $\sum_{i=1}^N\sum_{j=1}^4x_{i,j}(x_{i,j}-1)$ with $\sum_{i=1}^N\sum_{j=1}^4x_{i,j}^{(t)}(x_{i,j}^{(t)}-1)+\sum_{i=1}^N\sum_{j=1}^4(2x_{i,j}^{(t)}-1)(x_{i,j}-x_{i,j}^{(t)})$. Thus, as for our problem, CCCP involves iteratively solving a sequence of linear problems, as shown in Algorithm~\ref{dca}.
\begin{algorithm}[H]
 \caption{CCCP for Solving Problem~\ref{penalized}}
\label{dca}
\small{\begin{algorithmic}[1]
\STATE \textbf{Initialization}. Find an initial feasible point $\textbf{x}^{(0)}$  of Problem~\ref{penalized} and set $t=0$.
\STATE \textbf{Repeat}
\STATE  Set $\textbf{x}^{(t+1)}$ to be an optimal solution to the following convex problem:
\begin{align}
& \min_{\textbf{x}}\ \ \  \ \ \ \ \ \ \  G(\textbf{x})-\mu \check{H}(\textbf{x}; \textbf{x}^{(t)})\nonumber\\
&\ s.t. ~\ \ \ \ \ \ \ \ \ (\ref{cache1}), (\ref{energy2}), (\ref{sum}),(\ref{aa}), \nonumber
\end{align}
where $G(\textbf{x}) \triangleq \sum_{i = 1}^N \sum_{j=1}^4 -v_{i,j}x_{i,j}$ and $\check{H}(\textbf{x}; \textbf{x}^{(t)})\triangleq \sum_{i=1}^N\sum_{j=1}^4x_{i,j}^{(t)}(x_{i,j}^{(t)}-1)+\sum_{i=1}^N\sum_{j=1}^4(2x_{i,j}^{(t)}-1)(x_{i,j}-x_{i,j}^{(t)})$.
\STATE Set $t\leftarrow t+1$.
\STATE \textbf{until} $\left[G\left(\textbf{x}^{(t-1)}\right)-\mu \check{H} \left(\textbf{x}^{(t-1)}; \textbf{x}^{(t-2)}\right)\right]- \left[G \left(\textbf{x}^{(t)}\right) - \mu \check{H} \left(\textbf{x}^{(t)}; \textbf{x}^{(t-1)}\right)\right] \leq \delta$.
\end{algorithmic}}
\end{algorithm}

Last, based on Theorem~1 in \cite{exactpenalty}, we show the equivalence between Problem~\ref{equimmkp} and Problem~\ref{penalized} in the following lemma.

\begin{lemma}[Exact Penalty]\label{exact} For all $\mu>\mu_0$ where
\begin{equation}
\mu_0 \triangleq \frac{\sum_{i=1}^N\sum_{j=1}^4 -v_{i,j}x_{i,j}^0-\bar{R}(0)}{\max_{\textbf{x}} \left\{\sum_{i=1}^N\sum_{j=1}^4x_{i,j}(x_{i,j}-1):(\ref{cache1}), (\ref{energy2}), (\ref{sum}),(\ref{aa})\right\} },
 \end{equation}
with any $(x_{i,j}^0)_{i\in \mathcal{N}, j \in \{1,2,3,4\}}$ satisfying (\ref{cache1}), (\ref{energy2}), (\ref{sum}) and (\ref{aa}), Problem~\ref{penalized} and Problem~\ref{equimmkp} have the same optimal solution.
\end{lemma}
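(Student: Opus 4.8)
The plan is to invoke the exact penalty theory for difference-of-convex problems, concretely Theorem~1 of \cite{exactpenalty}, and verify that our Problem~\ref{equimmkp} and Problem~\ref{penalized} fit exactly into the framework it covers. First I would set up the abstract picture: write Problem~\ref{equimmkp} in the form $\min\{G(\mathbf{x}) : \mathbf{x}\in K,\ p(\mathbf{x})\le 0\}$, where $G(\mathbf{x})=\sum_{i,j}-v_{i,j}x_{i,j}$ is linear, $K$ is the polyhedron defined by (\ref{cache1}), (\ref{energy2}), (\ref{sum}) and (\ref{aa}) (in particular $K$ is compact, since all the $x_{i,j}$ lie in $[0,1]^{4N}$), and $p(\mathbf{x})\triangleq\sum_{i,j}x_{i,j}(1-x_{i,j})$ is the concave ``complementarity'' function appearing in (\ref{b}). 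The penalized Problem~\ref{penalized} is then exactly $\min\{G(\mathbf{x})-\mu\,(-p(\mathbf{x})) : \mathbf{x}\in K\}$, i.e. the objective is $G$ plus $\mu$ times the concave penalty $-p$; note $-p(\mathbf{x})=\sum_{i,j}x_{i,j}(x_{i,j}-1)$ matches the second term written in Problem~\ref{penalized}.

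Next I would record the two structural facts that make the exact-penalty machinery apply: (i) on $K$ we have $p(\mathbf{x})\ge 0$ always (each term $x_{i,j}(1-x_{i,j})\ge 0$ for $x_{i,j}\in[0,1]$), so the constraint $p(\mathbf{x})\le 0$ is equivalent to $p(\mathbf{x})=0$, which in turn forces every $x_{i,j}\in\{0,1\}$ — this is precisely why Problem~\ref{equimmkp} is equivalent to the integral Problem~\ref{mmkp}; and (ii) $-p$ is convex (indeed a finite concave-minimizing penalty on a polytope), the feasible set of Problem~\ref{equimmkp} is nonempty (any $0/1$ assignment satisfying (\ref{sum}) that respects the knapsack bounds gives a point), and $K$ is a polytope, so the supremum $\sup_{\mathbf{x}\in K}\{-p(\mathbf{x})\}$ defining the denominator of $\mu_0$ is finite and attained. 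With these in hand, Theorem~1 of \cite{exactpenalty} gives a threshold $\mu_0$ — expressible through any feasible reference point $(x_{i,j}^0)$ via $\mu_0=\big(\sum_{i,j}-v_{i,j}x_{i,j}^0-\bar R(0)\big)\big/\max_{\mathbf{x}\in K}\sum_{i,j}x_{i,j}(x_{i,j}-1)$, exactly as in the statement — such that for all $\mu>\mu_0$ the penalized problem and the constrained problem share the same optimal solution set. I would then translate that conclusion back into our notation, observing that $\bar R(0)$ is the optimal value of Problem~\ref{penalized} with $\mu=0$, i.e. just $\min_{\mathbf{x}\in K}G(\mathbf{x})$, so the formula for $\mu_0$ in the lemma is literally the instantiation of the abstract threshold.

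The main obstacle I anticipate is checking the precise hypotheses required by \cite{exactpenalty}'s Theorem~1 — typically one needs the penalty function to be nonnegative on the constraint-defining set and to vanish exactly on the feasible region, the ``outer'' set $K$ to be compact (or the objective to be coercive on it), and the candidate multiplier $\mu_0$ to be well-defined, i.e. the maximization in its denominator to be strictly positive or handled separately when $K$ already forces integrality. I would therefore spend most of the proof confirming: compactness and polyhedrality of $K$; that $p\ge 0$ on $K$ with equality iff $\mathbf{x}$ is integral; that the constrained problem is feasible so $\bar R(0)$ and the numerator are finite; and that the denominator $\max_{\mathbf{x}\in K}\sum_{i,j}x_{i,j}(x_{i,j}-1)$ is attained (it is, being a continuous function on a compact set) and is negative in general, so that $\mu_0$ as written is a finite real number. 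Once all boxes are ticked, the lemma follows immediately by quoting the cited theorem; I would not reprove the exact-penalty theorem itself but merely cite it, since the excerpt explicitly permits relying on \cite{exactpenalty}.
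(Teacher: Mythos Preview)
Your proposal is correct and follows exactly the paper's approach: the paper's proof is simply ``Lemma~\ref{exact} can be obtained directly from Theorem~1 in \cite{exactpenalty}.'' Your additional verification of the hypotheses (compactness of $K$, nonnegativity of $p$ on $K$, feasibility, finiteness of the denominator) is more detailed than the paper but entirely consistent with it.
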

\begin{proof}
Lemma~\ref{exact} can be obtained directly from Theorem~1 in \cite{exactpenalty}.
\end{proof}

Lemma~\ref{exact} illustrates that Problem~\ref{penalized} is equivalent to Problem~\ref{equimmkp} if the penalty parameter $\mu$ is sufficiently large. Thus, we can solve Problem~\ref{penalized} instead of Problem~\ref{equimmkp} by using CCCP.  However, it may not always be a feasible solution to Problem~\ref{equimmkp}. In order to obtain a global optima of Problem~\ref{equimmkp}, we obtain multiple local optimal solutions of Problem~\ref{penalized} via performing CCCP multiple times, each with a unique initial feasible point of Problem~\ref{penalized}, and then choose the one which achieves the minimum average value \cite{infeasible}.
\begin{figure}[t]
\begin{center}
 \subfigure[Cache size.]
  {\resizebox{8.3cm}{!}{\includegraphics{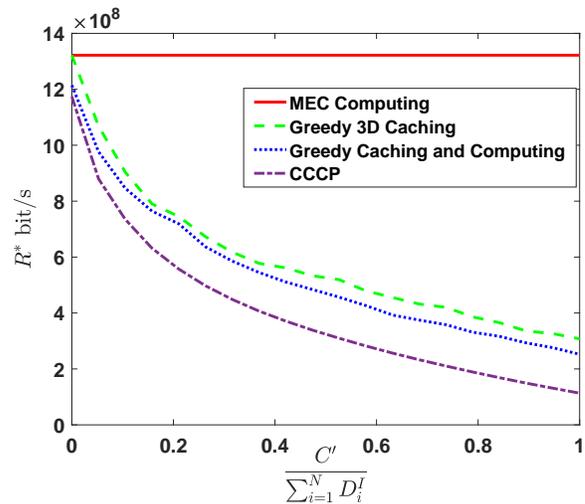}}}\quad\quad
 \subfigure[Computation frequency under CCCP.]
 {\resizebox{8.3cm}{!}{\includegraphics{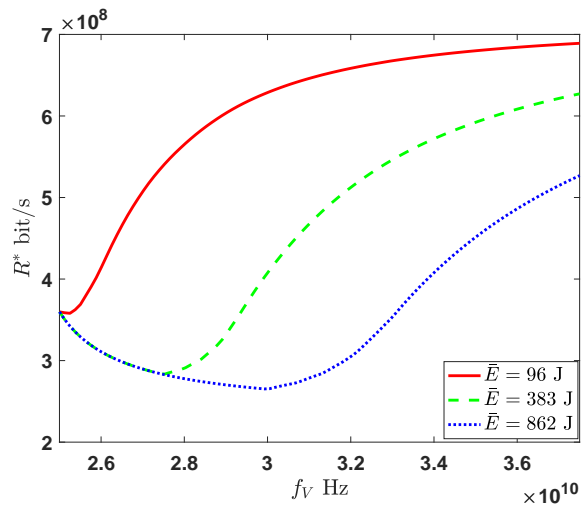}}}\quad\quad
\end{center}
   \caption{\small{Heterogeneous scenario analysis at $f_V= 50$ G Hz, $N=100, D_i^I \in [1,25]$ M bits, $\alpha =2$, $w=10$ cycle/bit, $P_i \propto \frac{1}{i^\gamma}$ with $\gamma= 0.8$, $C' = 0.3* \sum_{i=1}^ND_i^I$, $\bar{E} = 0.25*kwf_V^2\sum_{i=1}^NP_iD_i^I$, $\mu = 10^5$ unless otherwise stated}.}
\label{Effect22}
\end{figure}
\subsection{Numerical Results}

In this section, we illustrate the performance of CCCP via numerical results, as shown in Fig.~\ref{Effect22}. Specifically, CCCP is obtained via performing Algorithm~\ref{dca} with $\delta = 0.001$ 100 times, each starting with a random initial feasible point, and then selecting the local optima with the lowest average transmission rate value. We compare it with the following three baselines:
\begin{itemize}
\item MEC computing: requests for all FOVs are computed at the MEC server, i.e., $c_i^O=0,c_i^I=0,d_i=0$ for all $i\in \mathcal{N}$;
\item Greedy 3D caching: 3D FOVs are cached at the mobile VR device via greedy algorithm, i.e., sort $\mathcal{N}$ according to $\frac{P_iR_i^S}{D_i^O}$ in descending order, denote with $\lfloor j\rfloor$ the index $i\in \mathcal{N}$ with the $j$-th maximal value of $\frac{P_iR_i^S}{D_i^O}$, and $s_c$ the split index satisfying $\sum_{j=1}^{s_c-1} D^O_{\lfloor j \rfloor} \leq C$ and $\sum_{j=1}^{s_c} D^O_{\lfloor j \rfloor} > C$. Set $c_{\lfloor j\rfloor}^O=1,c_{\lfloor j\rfloor}^I=0,d_{\lfloor j\rfloor}=0$ for all $j\in \{1,\cdots,s_c\}$ and $c_{\lfloor j\rfloor}^O=0,c_{\lfloor j\rfloor}^I=0,d_{\lfloor j\rfloor}=0$, otherwise;
\item Greedy caching and computing: first, local computing with local 2D caching is determined via greedy algorithm, i.e., sort $\mathcal{N}$ according to $\frac{P_iR_i^S}{D_i^I+P_ikD_i^Iw_if_V^2}$ in descending order, denote with $\lfloor j\rfloor$ the index $i\in \mathcal{N}$ with the $j$-th maximal value of $\frac{P_iR_i^S}{D_i^I+P_ikD_i^Iw_if_V^2}$, and $s_c^1$ the split index satisfying $\sum_{j=1}^{s_c^1-1} D^I_{\lfloor j \rfloor} \leq C$ and $\sum_{j=1}^{s_c^1} D^I_{\lfloor j \rfloor} > C$ or $\sum_{j=1}^{s_c^1-1} P_{\lfloor j\rfloor}kD^I_{\lfloor j \rfloor}w_{\lfloor j\rfloor}f_V^2 \leq \bar{E}$ and $\sum_{j=1}^{s_c^1} P_{\lfloor j\rfloor}kD^I_{\lfloor j \rfloor}w_{\lfloor j\rfloor}f_V^2 > \bar{E}$. Set $c_{\lfloor j\rfloor}^O=0,c_{\lfloor j\rfloor}^I=1,d_{\lfloor j\rfloor}=1$ for all $j\in \{1,\cdots,s_c^1-1\}$ and $c_{\lfloor j\rfloor}^O=0,c_{\lfloor j\rfloor}^I=0,d_{\lfloor j\rfloor}=0$, otherwise; secondly, if there still exists underutilized cache size, i.e., $\sum_{j=1}^{s_c^1-1} D^I_{\lfloor j \rfloor} < C$, then 3D FOVs of the rest of viewpoints are cached at the mobile VR device via greedy algorithm. Otherwise, if $\sum_{j=1}^{s_c^1-1} P_{\lfloor j\rfloor}kD^I_{\lfloor j \rfloor}w_{\lfloor j\rfloor}f_V^2 < \bar{E}$, then local computing without caching is decided via greedy algorithm according to $\frac{P_i(R_i^S-R_i^V)}{P_ikD_i^Iw_if_V^2}$.
\end{itemize}


Fig.~\ref{Effect22}~(a) and Fig.~\ref{Effect22}~(b) illustrate the impacts of the local cache size, i.e., $C'$, and the local computation frequency, i.e., $f_V$, on the optimal average transmission rate in the heterogeneous scenario. We see that CCCP exhibits great promises in saving  communication bandwidth compared with the baselines. For example,  compared with greedy 3D caching and greedy caching and computing, CCCP brings larger transmission rate gain over MEC computing (e.g., 45\%,  48 \% vs. 63\% at $\frac{C'}{\sum_{i=1}^ND_i^I} = 20\%$).  

\section{Conclusion }

In this paper, we develop a novel MEC-based mobile VR delivery framework by jointly utilizing the caching and computing capacities of the mobile VR device. When FOVs are homogeneous, a closed-form expression for the optimal joint policy is derived, which reveals a fundamental tradeoff between the three primary resouces, i.e., communications, caching and computing. The tradeoff results show that:
\begin{itemize}
  \item When $f_V < F$, $R^*$ increases with $f_V$ if $\frac{N\bar{E}}{k f_V^2D^Iw} \leq C$ and stays unchanged with $f_V$, otherwise; $R^*$ decreases with $C$ at the rate of $\frac{R_S}{\alpha N}$ when $\frac{N\bar{E}}{k f_V^2D^Iw} \leq C$ and $\frac{R_S}{N}$, otherwise;
  \item When $F \leq f_V $, $R^*$ first decreases and then increases with $f_V$ if $\frac{N\bar{E}}{k f_V^2D^Iw} > C$ and increases with $f_V$, otherwise; $R^*$ decreases with $C$ at the rate of $\frac{R_S}{\alpha N}$ when $\frac{N\bar{E}}{k f_V^2D^Iw} \leq C$ and $\frac{R_V}{N}$, otherwise;
\end{itemize}

In the heterogeneous scenario, we transform the NP-hard problem into an equivalent IQP and solve it via CCCP, which obtains a local optima and is shown to achieve good performance in numerical results.

\section*{Appendix A: Proof of Lemma~\ref{ror1}}\label{proof of ror1}
When $f_V < F$, $R_S-R_V<0$ and the objective function of Problem~\ref{Prob2} increases with $d - \min\{c^I,d\}$. Thus, we can see that $d - \min\{c^I,d\} = 0$, i.e., $d\leq c^I$. In addition, based on Property~\ref{strucC}, by replacing $c^O$ with $ \frac{C - c^I}{\alpha}$, Problem~\ref{Prob2} can be rewritten as

\begin{align}
\textit{Problem 7}:
& \min_{c^I,d}  \ \ \ \ R_S\left(1-\frac{C}{\alpha N}\right) + \frac{R_S}{\alpha N}c^I-\frac{R_S}{N}d\nonumber\\
&\  s.t.\ \ \ \ \ c^I \in \left\{0,1,\cdots,C\right\},\label{c}\\
&\ \ \ \ d\in \left\{0,1,\cdots,\min\left\{c^I, \frac{N\bar{E}}{k f_V^2D^Iw}\right\}\right\}.\label{d}
\end{align}
In the following, we analyze the optimal solution to Problem~7 from the following two aspects.
\begin{itemize}
\item If $c^I \leq \frac{N\bar{E}}{k f_V^2D^Iw}$, (\ref{c}) and (\ref{d}) can be rewritten as
\begin{equation}\label{c1}
c^I \in \left\{0,1,\cdots,\min\left\{C,\frac{N\bar{E}}{k f_V^2D^Iw}\right\}\right\},
\end{equation}
\begin{equation}\label{d1}
d\in \left\{0,1,\cdots,c^I\right\}.
\end{equation}
Since the objective function of Problem~7 decreases with $d$, we have $d=c^I$ without loss of optimality. By replacing $d$ with $c^I$, and (\ref{c}) with (\ref{c1}), Problem~7 can be rewritten as
\begin{align}
\textit{Problem 8}:
& \min_{c^I}  \ \ \ \ R_S\left(1-\frac{C}{\alpha N}\right) - (\alpha-1)\frac{R_S}{\alpha N}c^I\nonumber\\
&\  s.t.\ \ c^I \in \left\{0,1,\cdots,\min\left\{C,\frac{N\bar{E}}{k f_V^2D^Iw}\right\}\right\}. \nonumber
\end{align}
Since $\alpha>1$, we can see that the objective function of Problem~8 decreases with $c^I$, and thus $c^{I^*} = \min\left\{C,\frac{N\bar{E}}{k f_V^2D^Iw}\right\}$. Accordingly, we have $d^*=\min\left\{C,\frac{N\bar{E}}{k f_V^2D^Iw}\right\}$ and  $c^{O^*} = \frac{C - c^{I^*}}{\alpha}$.
\item If $c^I \geq \frac{N\bar{E}}{k f_V^2D^Iw}$, (\ref{c}) and (\ref{d}) can be rewritten as
\begin{equation}\label{c11}
c^I \in \left\{\frac{N\bar{E}}{k f_V^2D^Iw},\cdots,C\right\},
\end{equation}
\begin{equation}\label{d11}
\ \ d\in \left\{0,1,\cdots,\frac{N\bar{E}}{k f_V^2D^Iw}\right\}.
\end{equation}
Since the objective function of Problem~7 decreases with $d$ and increases with $c^I$, we have $d^*=\frac{N\bar{E}}{k f_V^2D^Iw}$ and $c^{I^*} = \frac{N\bar{E}}{k f_V^2D^Iw}$. Accordingly, we have $c^{O^*}$ via $c^{O^*} = \frac{C - c^{I^*}}{\alpha}$. Since $c^I \geq \frac{N\bar{E}}{k f_V^2D^Iw}$ holds only when $C \geq \frac{N\bar{E}}{k f_V^2D^Iw}$, we have $c^{I^*} = \min\left\{C,\frac{N\bar{E}}{k f_V^2D^Iw}\right\}$, $d^*=\min\left\{C,\frac{N\bar{E}}{k f_V^2D^Iw}\right\}$ and  $c^{O^*} = \frac{C - c^{I^*}}{\alpha}$.
\end{itemize}
Thus, (\ref{3D1}), (\ref{2D1}) and (\ref{off1}) hold. By substituting (\ref{3D1}), (\ref{2D1}) and (\ref{off1}) into the objective function of Problem~\ref{Prob2}, (\ref{rate1}) holds. The proof ends.
\section*{Appendix B: Proof of Lemma~\ref{ror2}}\label{proof of ror2}
When $F \leq f_V$, $R_V\leq R_S<\alpha R_V$. We analyze the optimal solution to Problem~\ref{Prob2} from the following two aspects.
\begin{itemize}
\item If $c^I \leq d$, Problem~\ref{Prob2} can be rewritten as
\begin{align}
& \min_{c^I,d}  \ R_S\left(1-\frac{C}{\alpha N}\right) - \frac{\alpha R_V-R_S}{\alpha N}c^I-\frac{R_S-R_V}{N}d\nonumber\\
&\  s.t.\ \ \ \ \ \ \ \ \ \ \ \ \ \ \ \ \ c^I \in \left\{0,1,\cdots,\min\left\{d,C\right\}\right\},\label{c22}\\
&\ \ \ \ \ \ \  \ \ \ \ \ \ \ \ \ \ \ \ \ \ \  d\in \left\{0,1,\cdots,\frac{N\bar{E}}{k f_V^2D^Iw}\right\}.\label{d22}
\end{align}\label{Prob10}
Since $R_V\leq R_S<\alpha R_V$, we have the objective function decreases with $c^I$ and $d$. Thus, we have $c^{I^*} = \min\left\{d^*,C\right\}$, $d^*=\frac{N\bar{E}}{k f_V^2D^Iw}$ and $c^{O^*} = \frac{C - c^{I^*}}{\alpha}$.
\item If $c^I \geq d$, Problem~\ref{Prob2} can be rewritten as
\begin{align}
& \min_{c^I,d}  \ \ \ \ R_S\left(1-\frac{C}{\alpha N}\right) + \frac{R_S}{\alpha N}c^I-\frac{R_S}{N}d\nonumber\\
&\  s.t.\ \ \ \ \ \ \ c^I \in \left\{d,\cdots,C\right\},\label{cC}\\
&\ \ \ \ \ \ \  \ \ \ \ \   d\in \left\{0,1,\cdots,\frac{N\bar{E}}{k f_V^2D^Iw}\right\}.\label{dD}
\end{align}\label{Prob11}
Since  the objective function increases with $c^I$ and decreases with $d$, we have $c^{I^*} = d^*$, $d^*=\frac{N\bar{E}}{k f_V^2D^Iw}$ and $c^{O^*} = \frac{C - c^{I^*}}{\alpha}$. In addition, since $c^{I^*} = d^*$ holds only when $C\geq d^*$, $c^{I^*}$ can also be rewritten as $c^{I^*} = \min\left\{d^*, C\right\}$.
\end{itemize}
Thus, (\ref{3D2}), (\ref{2D2}) and (\ref{off2}) hold. By substituting (\ref{3D2}), (\ref{2D2}) and (\ref{off2}) into the objective function of Problem~\ref{Prob2}, (\ref{rate2}) holds. The proof ends.
\bibliographystyle{IEEEtran}

\begin{thebibliography}{}
\providecommand{\url}[1]{#1}
\csname url@samestyle\endcsname
\providecommand{\newblock}{\relax}
\providecommand{\bibinfo}[2]{#2}
\providecommand{\BIBentrySTDinterwordspacing}{\spaceskip=0pt\relax}
\providecommand{\BIBentryALTinterwordstretchfactor}{4}
\providecommand{\BIBentryALTinterwordspacing}{\spaceskip=\fontdimen2\font plus
\BIBentryALTinterwordstretchfactor\fontdimen3\font minus
  \fontdimen4\font\relax}
\providecommand{\BIBforeignlanguage}[2]{{%
\expandafter\ifx\csname l@#1\endcsname\relax
\typeout{** WARNING: IEEEtran.bst: No hyphenation pattern has been}%
\typeout{** loaded for the language `#1'. Using the pattern for}%
\typeout{** the default language instead.}%
\else
\language=\csname l@#1\endcsname
\fi
#2}}
\providecommand{\BIBdecl}{\relax}
\BIBdecl

\end{thebibliography}


\begin{thebibliography}{1}
\providecommand{\url}[1]{#1}
\csname url@samestyle\endcsname
\providecommand{\newblock}{\relax}
\providecommand{\bibinfo}[2]{#2}
\providecommand{\BIBentrySTDinterwordspacing}{\spaceskip=0pt\relax}
\providecommand{\BIBentryALTinterwordstretchfactor}{4}
\providecommand{\BIBentryALTinterwordspacing}{\spaceskip=\fontdiMEN2\font plus
\BIBentryALTinterwordstretchfactor\fontdiMEN3\font minus
  \fontdiMEN4\font\relax}
\providecommand{\BIBforeignlanguage}[2]{{%
\expandafter\ifx\csname l@#1\endcsname\relax
\typeout{** WARNING: Ieeetran.bst: No hyphenation pattern has been}%
\typeout{** loaded for the language `#1'. Using the pattern for}%
\typeout{** the default language instead.}%
\else
\language=\csname l@#1\endcsname
\fi
#2}}
\providecommand{\BIBdecl}{\relax}
\BIBdecl
\bibitem{Sunvr}
Y.~Sun, Z.~Chen, M.~Tao, and H.~Liu, ``Communication, Computing and Caching for Mobile VR Delivery: Modeling and Trade-off,'' \emph{IEEE ICC}, to appear, May 2018.
\bibitem{burst}
R.~Begole, [Online] available: https://www.forbes.com/sites/valleyvoices/\\2016/02/09/why-the-internet-pipes-will-burst-if-virtual-reality-takes-off/.
\bibitem{whitepaper}
ABI Research, Qualcomm, ``Augmented and Virtual Reality: the First Wave of 5G Killer Apps,'' [Online] available: https://www.qualcomm.com/documents/augmented-and-virtual-reality-first-wave-5g-killer-apps, Feb. 2017.
\bibitem{VRmarket}
Juniper, \!``Virtual reality markets: Hardware, content $\&$ accessories 2017--2022,'' [Online] available: https://www.juniperresearch.com/researchstore/innovation-disruption/virtual-reality/hardware-content-accessories, 2017.
\bibitem{e}
E.~Bastug, M.~Bennis, M.~Medard and M.~Debbah, ``Toward Interconnected Virtual Reality: Opportunities, Challenges, and Enablers,'' \emph{IEEE Commun. Mag.}, vol.~55, no.~6, pp.~110--117, June. 2017.
\bibitem{Simone}
S. Mangiante, G.~Klas, A.~Navon, G.~Zhuang, J.~Ran and M.~D.~Silva, ``VR is on the Edge: How to Deliver $360^{\circ}$ Videos in Mobile Networks,'' \emph{ACM SIGCOMM. Workshop on VR/AR Network}, Aug. 2017.
\bibitem{3D}
S.~Reichelt, R.~Hausselr, G.~Futterer, and N.~Leister, ``Depth cues in human visual perception and their realization in 3D displays,'' \emph{in SPIE Three-Dimensional Imaging, Visualization, and Display 2010}, Orlando, FL, April 2010.
\bibitem{localoptima}
Le~Thi Hoai An and Pham Dinh Tao, ``Solving a class of linearly constrained indefinite quadratic problems by D.C. algorithms,'' \emph{Journal of Global Optimization}, vol. 11, no. 3, pp. 253--285, 1997.
\bibitem{tile1}
V.~R.~Gaddam, M.~Riegler, R.~Eg, C.~Griwodz, and P.~Halvorsen, ``Tiling in interactive panoramic video: Approaches and Evaluation,'' \emph{IEEE Trans. on Multimedia}, vol.~18, no.~9, pp.~1819--1831, 2016.
\bibitem{tile2}
T.~El-Ganainy and M.~Hefeeda. 2016. Streaming Virtual Reality Content. arXiv preprint arXiv:1612.08350 (2016).
\bibitem{qian}
F.~Qian, L.~Ji, R.~Han, and V.~Gopalakrishnan, ``Optimizing 360 video delivery over cellular networks,'' \emph{in Proceedings of the 5th Workshop on All Things Cellular: Operations, Applications and Challenges. ACM}, pp.~1--6, 2016.
\bibitem{fixation}
C.~Fan, J.~Lee, W.~Lo, C.~Huang, K.~Chen, and C.~Hsu, ``Fixation Prediction for 360∼ Video Streaming to Head-Mounted Displays,'' \emph{in Proceedings of ACM NOSSDAV}, June 2017.
\bibitem{chen1}
M. Chen, W. Saad and C. Yin, ``Resource Management for Wireless Virtual Reality: Machine Learning Meets Multi-Attribute Utility,''  \emph{IEEE GLOBECOM}, Singapore, Dec. 2017, pp. 1-7.
\bibitem{chen2}
M. Chen, W. Saad and C. Yin, ``Virtual Reality over Wireless Networks: Quality-of-Service Model and Learning-Based Resource Management,'' \emph{IEEE Trans. on Commun.,} to appear, 2018.
\bibitem{3C}
H.~Liu, Z.~Chen, and L.~Qian, ``The three primary colors of mobile systems,'' \emph{IEEE Commun. Mag.}, vol.~54, no.~9, pp.~15--21, Sep.~2016.
\bibitem{3C1}
H.~Liu, Z.~Chen, X.~Tian, X.~Wang, and M.~Tao, ``On content-centric wireless delivery networks,'' \emph{IEEE Wireless Commun.}, vol.~21, no.~6, pp.~118--125, Jan.~2014.
\bibitem{erol}
Erol-Kantarci M., Sukhmani~S., ``Caching and Computing at the Edge for Mobile Augmented Reality and Virtual Reality (AR/VR) in 5G,''  \emph{Ad Hoc Networks.}, vol.~223, pp.~169--177, 2018.
\bibitem{mohammed1}
M.~S.~Elbamby, C.~Perfecto, M.~Bennis and K.~Doppler, ``Towards Low-Latency and Ultra-Reliable Virtual Reality,'' \emph{in IEEE Network}, vol. 32, no. 2, pp. 78-84, 2018. 
\bibitem{colla}
A.~Ndikumana, S.~Ullah, T.~LeAnh, N.~H.~Tran, and C.~S.~Hong, ``Collaborative Cache Allocation and Computation Offloading in Mobile Edge Computing,'' \emph{IEEE Asia-Pacific Netw. Operation Manage. Symp.}, pp. 366--369, Sep.~2017.
\bibitem{bigdata}
A.~Ndikumana, N.~H.~Tran, T.~M.~Ho, Z.~Han, W.~Saad, D.~Niyato and C.~S.~Hong, ``Joint Communication, Computation, Caching, and Control in Big Data Multi-access Edge Computing,'' arXiv preprint arXiv:1803.11512, 2018.
\bibitem{game}
S.~Kim, ``5G Network Communication, Caching, and Computing Algorithms Based on the Two--Tier Game Model,'' \emph{ETRI Journal}, vol.~40, no.~1, pp.~61--71, Feb.~2018.
\bibitem{J}
J.~Chakareski, ``VR/AR Immersive Communication: Caching, Edge Computing, and Transmission Trade-Offs,'' \emph{ACM SIGCOMM. Workshop on VR/AR Network}, Aug.~2017.
\bibitem{mohammed2}
M.~S.~Elbamby, C.~Perfecto, M.~Bennis and K.~Doppler, ``Edge Computing Meets Millimeter-wave Enabled VR: Paving the Way to Cutting the Cord,'' \emph{IEEE WCNC}, 2018.
\bibitem{xiaoyang}
X.~Yang, Z.~Chen, K.~Li, Y.~Sun, N.~Liu, W.~Xie and Y.~Zhao, ``Communication--Constrained Mobile Edge Computing Systems for Wireless Virtual Reality: Scheduling and Tradeoff,'' \emph{IEEE Access}, vol.~6, pp.~16665--16677, Mar.~2018.


\bibitem{cycle}
Y.~Mao, C.~You, J.~Zhang, K.~Huang, and K.~R.~Letaief, ``A Survey on Mobile Edge Computing: The Communication Perspective,'' \emph{IEEE Commun. Surveys $\&$ Tutorials}, vol.~19, no.~4, pp.~2322--2358, 2017.
\bibitem{mmkp}
M.~Hifi, M.~Michrafy and A.~Sbihi, ``Heuristic algorithms for the multiple-choice multidimensional knapsack problem,'' \emph{Journal of the Operational Research Society}, vol.~55, pp. 1323--1332, 2004.
\bibitem{exactpenalty}
H.~A. Le~Thi, T. P. Dinh, and H. Van Ngai, ``Exact penalty and error bounds in dc programming,'' \emph{Journal of Global Optimization}, vol. 52, no. 3, pp. 509--535, 2012.
\bibitem{infeasible}
H. A. Le Thi, T. P. Dinh, H. M. Le, and X. T. Vo, ``DC approximation approaches for sparse optimization,'' \emph{european Journal of Operational Research}, vol.~244, no.~1, pp.~26--46,~2015.
\bibitem{dcacccp}
 B.~Sriperumbudur and G.~Lanckriet, ``On the convergence of concave-convex procedure,'' \emph{NIPS Workshop on Optimization for Machine Learning}, 2009.
\end{thebibliography}

\end{document}